\newtheorem{assumption}{Assumption}
\newtheorem{observation}{Observation}
\newtheorem{definition}{Definition}
\newtheorem{proposition}{Proposition}
\newtheorem{theorem}{Theorem}
\newtheorem{example}{Example}
\newtheorem{lemma}{Lemma}
\DeclareMathOperator{\E}{\mathbb{E}}
\newcommand{\TM}[2]{#1^{(#2)}}
\newcommand{\tm}[1]{\TM{#1}{t}}
\begin{document}
\title{Algorithmic Monetary Policies for Blockchain Participation Games}
%
%

\author{Diodato Ferraioli\thanks{University of Salerno, Email: {\tt dferraioli@unisa.it}.} \and Paolo Penna\thanks{IOG, Email: {\tt paolo.penna@iohk.io}.} \and Manvir Schneider\thanks{Cardano Foundation, Email: Email: {\tt manvir.schneider@cardanofoundation.org}.} \and Carmine Ventre\thanks{King's College London, Email: Email: {\tt carmine.ventre@kcl.ac.uk}.}}

\date{}


%
%
\maketitle              
\begin{abstract}
A central challenge in blockchain tokenomics is aligning short-term performance incentives with long-term decentralization goals. We propose a framework for algorithmic monetary policies that navigates this tradeoff in repeated participation games. Agents, characterized by type (capability) and stake, choose to participate or abstain at each round; the policy (probabilistically) selects high-type agents for task execution (maximizing throughput) while distributing rewards to sustain decentralization. 
We analyze equilibria under two agent behaviors: myopic (short-term utility maximization) and foresighted (multi-round planning). For myopic agents, performance-centric policies risk centralization, but foresight enables stable decentralization with some volatility to the token value. We further discuss virtual stake--a hybrid of type and stake--as an alternative approach. 
We show that the initial virtual stake distribution critically impacts long-term outcomes, suggesting that policies must indirectly manage decentralization.
\end{abstract}

\section{Introduction}
Decentralized systems face a fundamental tension between efficiency and equity: protocols must incentivize high-performing participants while ensuring broad-based participation to prevent centralization. This challenge mirrors socioeconomic tradeoffs, where productivity-driven rewards risk wealth concentration, yet excessive redistribution may dilute incentives. In blockchain systems, where tokenomics govern security and adoption, algorithmic policies must navigate this tradeoff explicitly.

The current implementations, however, tend to favor one dimension over the other. 
On the one hand of the scale, Proof of Work (PoW) focuses on the performances by \emph{selecting} {(with high probability)} the most performant 
participant to extend the chain and \emph{rewards} them with newly minted coins, disregarding how equitable that is for the blockchain community. On the other hand, instead, Proof of Stake (PoS) 
completely ignores the performances by {(probabilistically)} {selecting and rewarding} in proportion to the preexisting stake, thereby ignoring how quickly the selected participants will be in extending the chain. Importantly, this tradeoff needs to be solved not only to extend the blockchain, but any task that allows the blockchain to make progress.

In this context, the permissionless nature of blockchain gives rise to \emph{participation games}, as first observed in \citep{chaidos2023blockchain}. Players can decide to participate (e.g., running the software) or not, and the selection and reward rules need to be engineered to guarantee that the blockchain system is healthy. \citet{chaidos2023blockchain} consider different models of participation in which players are selfish but not malicious (e.g., they run the prescribed software). 
{
Unlike \cite{chaidos2023blockchain}, which focuses on (single-shot) monetary rewards, our work explores the multi-dimensional nature of system objectives in a multi-round game, where participants are awarded \emph{tokens} whose value can change over time---for instance, if the system becomes overly centralized, as noted in the context of Bitcoin miners \citep{Dec-Bitcoin}: 
\begin{center}
\begin{minipage}{0.8\textwidth}
    \emph{[...] While strong miners gain political clout and attract more members, getting
too large raises alarms among the community about centralization. }
\end{minipage}
\end{center}
We highlight the strategic tradeoff faced by participants: while active engagement yields token rewards, excessive accumulation of influence {(through tokens)} can threaten decentralization and diminish token value. Our goal is to formally address this tension, and analyze how different monetary policies impact system performance, decentralization and token value over time. }

\subsection{Our Contributions}
We propose a framework for \emph{algorithmic monetary policies} that select participants according to their ability (for example, {\`a la} proof of work) 
but redistribute rewards to sustain decentralization, bridging the gap between short-term performance and long-term ecosystem health.

In our framework, players have two parameters: their type and stake. The higher the type, the more capable the agent is. For simplicity, these two parameters are assumed to be public knowledge. We are interested in designing a monetary policy over a (potentially infinite) time horizon that at each time step (probabilistically) selects \emph{one} agent to perform a task and (probabilistically) distributes a certain budget of tokens among the agents who participated in that round. 

Agents decide whether to participate or abstain at each step by considering the rewards of each step, the value of the stake they possess, and the potential participation costs. Importantly, the value of the token is a function of the health of the system (as per any reasonable decentralization measure). 
Thus, monetary policies must balance the performance of the system with its decentralization. Either objective could easily be maximized by selecting and rewarding the participating agent with highest type or uniformly distributing the budget per round. While the study of the equilibria for either extreme is relatively simple, understanding the interplay between multi-objective system-level goals and individual agents' incentives is less obvious. 

The main question of interest for this study is the following: How can a policy probabilistically select high-type agents for efficiency while redistributing rewards to maintain decentralization, given endogenous token value? 
We provide a multifaceted answer as follows. 
\begin{itemize}
    \item If agents are myopic, that is, they only care about maximizing their utility per round, then the answer is unfortunately negative. More specifically, if the reward for the participant with the highest type is a constant higher than the other rewards, then there are instances for which the system will converge toward minimum decentralization.
\item A more hopeful conclusion is proved for the case in which the agents have some foresight. Specifically, if agents have some form of lookahead 
(i.e., number of future rounds considered by agents in their decision making), 
the performance-promoting policies above \emph{eventually} maintain a level of decentralization above a certain threshold. The 
decentralization threshold is defined to guarantee that the system value is sufficient for the participation of at least one agent. Rather disappointingly, en route, the decentralization can reach its minimum value, implying a certain degree of volatility of the token value. 
\item Building upon the result above, we can identify a different class of policies that maintain decentralization above the said threshold also for myopic agents. The idea is for the policy to simulate the reasoning of agents with foresight. The price to pay is that the policy will, from time to time, not favor high-type agents thus accepting to momentarily lower throughput for the sake of decentralization. 
\item We complete this part of the paper by comparing the policy above to an all-pay policy that deterministically selects the agent with highest type and evenly distributes the budget among all the participants. It is not hard to see that for this monetary policy, all agents participate at equilibrium and hence the decentralization of the system never decreases. However, agents would have an incentive to create sibyls and get a higher share of the budget. We show why our policies are to be preferred by showing that they are more resistant to sibyl-proofness in presence of strong competition.   
\end{itemize}
Finally, our study considers a different approach to navigate the tradeoff. In a vein similar to \cite{Minotaur}, we introduce the concept of virtual stake, which is a linear combination of type and stake. We consider monetary policies that probabilistically select and reward agents in proportion to their virtual stake. The results above used a virtual stake equal to the type (one extreme with zero weight for the stake); PoS systems adopt a virtual stake equal to the stake (assigning zero weight to the type). The main result here is that the initial virtual stake of the system plays an important role in defining the budget shares. This means that an unequal blockchain (one with low decentralization) will remain unequal unless the magnitude of types can address the stake unbalance with suitable weights. This result rather disappointingly points towards monetary policies that only indirectly take into account stake distribution, like the performance-driven policies we analyze. 

Our results and framework pave the way for further work in the relatively unexplored research area of algorithmic monetary policies. Blockchain systems are but one natural application area, wider societal challenges being another avenue for further studies. Our positive results could inform upgrades to PoS-based blockchains, DeFi protocols, or DAO governance. There may also be empirical implications for token value/stability that can be drawn upon our model.

\smallskip \noindent \textbf{Roadmap.} Section 2 discusses related literature. Section 3 introduces our model. Section 4 studies performance-driven policies whereas Section 5 treats those based on virtual stakes. Section 6 concludes our paper. Proofs are postponed to a technical appendix.

\subsection{Related Work}
The study of incentives, participation, and decentralization in proof-of-stake (PoS) blockchains has attracted considerable attention from both theoretical and empirical perspectives. Our work builds on this growing body of literature by modeling a dynamic participation game in which monetary policies directly influence the system’s decentralization and the endogenous value of the token.

\smallskip \noindent \textbf{Foundations of PoS Protocols.} Foundational work in PoS consensus protocols began with 
\citet{kiayias2017ouroboros}, who introduced Ouroboros, the first provably secure PoS protocol. 
\citet{saleh2021blockchain} complemented this by presenting an economic model where PoS aligns incentives without relying on energy-intensive computation. \citet{brunjes2020reward} and \citet{gersbach2022staking} developed reward-sharing schemes and models of pool formation that incentivize decentralization, showing how protocol parameters affect equilibrium behavior. \citet{schneider2023staking} 
further characterized equilibria when stake is arbitrarily distributed among validators.

\smallskip \noindent \textbf{Strategic Participation and Validator Incentives.}
Game-theoretic modeling of strategic validator behavior has gained traction. \citet{john2021equilibrium} analyzed equilibrium staking levels in PoS systems considering opportunity costs. \citet{birmpas2024reward} studied how reward designs and committee sizes affect governance participation, while \citet{chitra2021competitive} explored competition between staking and on-chain lending markets. \citet{chaidos2023blockchain} developed a formal framework to capture validator incentives under different reward-sharing rules. \citet{georganas2025airdrop} modeled airdrops as strategic games and analyzed user incentives under participation choices. Our model contributes to this literature by considering repeated participation decisions under dynamic stake evolution and endogenous token value. The suite of models in \citep{chaidos2023blockchain} includes a richer strategy space (i.e., in some cases, players are allowed to retract -- participate without executing the task). We here begin our exploration by considering the simpler binary action space.

\smallskip \noindent \textbf{Centralization Risks in PoS.} Empirical and theoretical concerns around centralization risks in PoS systems are addressed by \citet{he2020staking}, who identify factors that drive the dominance of large pools, and by \citet{ovezik2022decentralization}, who analyze Cardano’s pooling behavior. \citet{motepalli2024does,motepalli2025decentralization} propose metrics and formal tools to measure and promote decentralization, while \citet{srivastava2024centralizationproofofstakeblockchainsgametheoretic} focus on bootstrapping protocols and early-stage concentration. \citet{grandjean2024ethereum} provide empirical insights into validator participation and concentration in Ethereum’s PoS system. \citet{li2020comparison} compare decentralization outcomes between delegated PoS and proof-of-work, highlighting structural centralization challenges specific to DPoS.

\smallskip \noindent \textbf{Tokenomics and Monetary Policy Design.} Design of monetary policies and tokenomics has emerged as a critical theme. \citet{kiayias2024balancing} examine tradeoffs between high validator participation and decentralization, while \citet{kiayias2024single} analyze single- vs. two-token models in terms of governance and efficiency. \citet{cong2021tokenomics} offer a dynamic model of token adoption, usage, and valuation, emphasizing how feedback loops between usage and incentives can drive token value—a perspective that aligns with our endogenized token valuation based on participation and system health.

\smallskip \noindent \textbf{Hybrid and Multi-Resource Consensus.} Our model also connects to literature on hybrid consensus protocols. \citet{Minotaur} introduce Minotaur, a multi-resource consensus combining proof-of-work and proof-of-stake to improve robustness under resource fluctuations. Similarly, we consider mechanisms that incorporate heterogeneous validator capabilities and seek to maintain performance while ensuring long-term decentralization.

\section{Model and Preliminaries}

\newcommand{\PL}{\mathcal{P}}
We consider a game with a set $\PL=\{1,\ldots,n\}$ of $n$ players. The game is run on an horizon of $T$ (possibly infinite) rounds, and at each round  $t$ every player
$i\in \PL$ has two possible actions: to \emph{Participate} or to \emph{Abstain}. 
 Participating has a \emph{cost} $\tm{c_i}$ for player $i$ and round $t$. Moreover, at each round $t\geq 1$ the system (monetary policy) selects a \emph{winner} among the participating players in this round, 
distributes rewards to players (possibly also to non-winners) in the form of different  amounts of \emph{tokens}, thus changing the amount of tokens that players hold so far.  The latter determines \emph{decentralization} of the system, and thus the \emph{token value}. The choice of the winner may depend also on the individual \emph{types} of the participants, which correspond to some intrinsic  characteristic which is important for the system (e.g., their hardware or latency). 
Specifically, we have the following key definitions and notation:

\begin{itemize} 
    \item  $\tau$ is the type vector of players, with $\tau_i \geq 1$ being the type of $i$; 
    \item $\tm{\sigma}$ is the token stake vector at the beginning
    of round $t$, that is, $\tm{\sigma_i}$ is the amount of tokens held by $i$;
    \item $\tm{P} \subseteq \PL$ is the set of \emph{participating} players at round $t$;
    \item $\tm{w}$ is a probability distribution over the current participating players of choosing a particular winner; 
    \item $\tm{\beta}$ is the  budget at round $t$;
    \item $\tm{b}$ is the budget vector at round $t$, which divides (part of) the budget among the participating players,  $\sum_{i \in \tm{P}} \tm{b}_i \leq \tm{\beta};$
    \item  $\tm{d}$ is the decentralization at round $t$;
    \item  $\tm{v}$ is the token value at round $t$.
    \end{itemize}

\smallskip \noindent \textbf{Decentralization and Monetary Policies: Definitions and Assumptions.} Before we define the utilities of the players at each round, we need to specify how the winner, the rewards (budget), the decentralization, and token value depend on the strategies of the players at the current round. 
We stress that the strategies of the players affect both the received rewards $\tm{b_i}$ and the token value $\tm{v}$. In particular, we make the following assumptions. 
\begin{assumption}\label{assmpt:noinflation}
    The budget is constant, i.e., $\tm{\beta} = \beta$ for each time step $t$.
\end{assumption}

Assumption~\ref{assmpt:noinflation} restricts the focus to 
policies that give constant cumulative rewards to players. Although limiting, this restriction is sufficient to highlight 
the tradeoff between rewarding performance and keeping decentralization high.

\begin{assumption}\label{assmpt:onlyP}
    Decentralization is evaluated with respect to the stake of the currently participating players $\tm{P}$, that is, $\tm{d}=d(\tm{\sigma}_{P})$, where $\tm{\sigma}_{P}$ is the vector $\tm{\sigma}$ restricted to the participating players $\tm{P}$ only.
    The token value is of the form $\tm{v}=v(\tm{d})$ for arbitrary $v(\cdot)$ non-decreasing in 
    $\tm{d}$.~\footnote{The value of the token in principle depends also on the winner type (i.e., efficiency). We restrict to only decentralization since we focus on policies favoring high-type agents thereby indirectly accounting for efficiency (and its effects on token values).}
\end{assumption}

Assumption \ref{assmpt:onlyP} posits that perceived decentralization, i.e., the one defined by the set of participating players, is the focus of this study. This models cases where it is the perception of decentralization to drive markets forces and the demand/supply curves for the token. Perception is in general important in determining agents' behavior in crypto markets \citep{GBERS18}.

We do not restrict our attention to a particular decentralization measure. Instead, we consider any measure that satisfies the following natural properties:

\begin{assumption}\label{assmpt:dm}
   Decentralization  satisfies the following two conditions:
    \begin{enumerate} 
    \item \label{dm-singleton} If there is a single participant, i.e., $\sigma$ is a singleton, then the decentralization measure is minimized;
    \item \label{dm-mon-remove} For every $\sigma$, if $d(\sigma) \geq d(\sigma_{-i^*})$, where $i^* \in \arg \max_j \{\sigma_j\}$, then $d(\sigma) \geq d(\sigma_{-i})$ for every $i \neq i^*$.
\end{enumerate}
\end{assumption}

\begin{definition}[$\tau$-Decentralization Index]\label{def:Nakamoto}
Consider w.l.o.g. a stake vector $\sigma$ ordered by weakly decreasing stake, 
that is, $\sigma_1\geq \sigma_2\geq \cdots$. 
 The $\tau$-Decentralization Index is the minimum number of parties that collectively control more than a $\tau$ fraction of
the total stake, i.e., $d(\sigma) = \min\{k \in \{1,\ldots,n\}| \ \sum_{i=1}^k \sigma_i> \tau\sum_i \sigma_i\}$. 
The Nakamoto index is 
the $\tau$-Decentralization Index for $\tau=1/2$. 
\end{definition}
We now verify that the $\tau$-Decentralization Index 
satisfies Assumption~\ref{assmpt:dm}.
\begin{proposition}\label{prop:Nakamoto}
    For any value of $\tau$, the $\tau$-Decentralization Index satisfies Assumption~\ref{assmpt:dm}. In particular, the Nakamoto Index does too.
\end{proposition}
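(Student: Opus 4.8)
The plan is to verify that the $\tau$-Decentralization Index $d(\sigma) = \min\{k : \sum_{i=1}^k \sigma_i > \tau \sum_i \sigma_i\}$ satisfies both conditions of Assumption~\ref{assmpt:dm}, treating each condition separately. Throughout I assume the stake vector is sorted in weakly decreasing order, $\sigma_1 \geq \sigma_2 \geq \cdots$, as in Definition~\ref{def:Nakamoto}.

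For condition~\ref{dm-singleton}, I would observe that the index $d(\sigma)$ takes values in $\{1,\ldots,n\}$ and that $d(\sigma)=1$ is its minimum possible value. When $\sigma$ is a singleton (a single participant), the partial sum for $k=1$ already equals the total stake, so $\sigma_1 = \sum_i \sigma_i > \tau \sum_i \sigma_i$ for any $\tau < 1$, giving $d(\sigma)=1$. Hence the measure is minimized, as required. (I would note the implicit convention $\tau < 1$, since the total stake can never strictly exceed $\tau$ times itself when $\tau \geq 1$; this is the natural regime and matches $\tau = 1/2$ for the Nakamoto index.)

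For condition~\ref{dm-mon-remove}, the key structural fact is that removing a single element from $\sigma$ changes the index in a controlled way. I would argue the contrapositive-style monotonicity: removing the \emph{largest} element $i^*$ is the ``hardest'' removal to keep decentralization high, so if $d(\sigma) \geq d(\sigma_{-i^*})$ holds, then $d(\sigma) \geq d(\sigma_{-i})$ must hold for every other $i$. Concretely, I would compare $d(\sigma_{-i})$ across different choices of $i$: removing a smaller-stake element leaves the top-heavy prefix more intact, so the number of parties needed to cross the $\tau$ threshold is no larger than when removing $i^*$. The cleanest route is to show $d(\sigma_{-i}) \leq d(\sigma_{-i^*})$ for all $i \neq i^*$ by a prefix-sum comparison: since $\sigma_{i^*}$ is maximal, its removal decreases every partial sum $\sum_{j \leq k}$ by the largest amount relative to the new total, thereby pushing the threshold-crossing index to its largest value among all single-element removals. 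Then $d(\sigma) \geq d(\sigma_{-i^*}) \geq d(\sigma_{-i})$ follows immediately.

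The main obstacle I anticipate is handling the renormalization of the threshold carefully: both the prefix sum $\sum_{j=1}^k \sigma_j$ and the total $\sum_j \sigma_j$ change when an element is removed, and the total appears multiplied by $\tau$ on the right-hand side, so the comparison is not a simple ``subtract a constant from both sides'' argument. I would manage this by clearing denominators and comparing the strict inequality $\sum_{j=1}^k \sigma_j > \tau \sum_j \sigma_j$ in the forms obtained after removing $i$ versus $i^*$, being attentive to whether the removed index falls inside or outside the prefix $\{1,\ldots,k\}$ and to ties at the boundary. Finally, for the Nakamoto index I simply instantiate $\tau = 1/2$, which satisfies $\tau < 1$, so both conditions follow as a special case.
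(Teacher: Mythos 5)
Your treatment of Condition~\ref{dm-singleton} is fine (the paper also treats it as immediate). The genuine gap is in Condition~\ref{dm-mon-remove}: your entire argument rests on the unconditional claim that $d(\sigma_{-i^*}) \geq d(\sigma_{-i})$ for every $i \neq i^*$, i.e., that removing the largest element always yields the largest post-removal index. That claim is false. Take $\sigma = (5,5,1)$ and $\tau = 1/2$. Removing $i^*$ leaves $(5,1)$ with total $6$: since $5 > 3 = \tau \cdot 6$, we get $d(\sigma_{-i^*}) = 1$. Removing the smallest player instead leaves $(5,5)$ with total $10$: since $5 \not> 5 = \tau \cdot 10$ (the inequality in Definition~\ref{def:Nakamoto} is strict), we get $d(\sigma_{-3}) = 2$. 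So $d(\sigma_{-3}) > d(\sigma_{-i^*})$, and your chain $d(\sigma) \geq d(\sigma_{-i^*}) \geq d(\sigma_{-i})$ breaks at the second link, even though here $d(\sigma) = 2 \geq 1 = d(\sigma_{-i^*})$, so the hypothesis of Condition~\ref{dm-mon-remove} holds (and its conclusion also happens to hold, just not via your chain). The intuition you rely on is backwards in general: removing $i^*$ also lowers the threshold $\tau \sum_j \sigma_j$ by the largest amount, which can make it \emph{easier}, not harder, for the remaining large players to cross it.

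Because of this, Condition~\ref{dm-mon-remove} cannot be established by an unconditional comparison between removals; the hypothesis $d(\sigma) \geq d(\sigma_{-i^*})$ must be used in an essential way, and your proposed argument never invokes it. The paper's proof proceeds by contradiction: assuming $k = d(\sigma) < d(\sigma_{-i})$ for some $i \neq i^*$, it takes a witnessing coalition $C$ of size $k$ with $\sum_{j \in C} \sigma_j > \tau \sum_j \sigma_j$, shows that $i$ must lie in $C$, that every player $\ell \notin C$ has stake $\sigma_\ell < (1-\tau)\sigma_i$, hence that $i^* \in C$ as well, and then shows that the coalition $C^*$ obtained by swapping $i^*$ out of $C$ for the largest player outside $C$ dominates every size-$k$ coalition avoiding $i^*$, yet still falls below the threshold $\tau \sum_{j \neq i^*} \sigma_j$. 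This forces $d(\sigma_{-i^*}) \geq k+1 > d(\sigma)$, contradicting the hypothesis. That coalition-swapping argument is the missing idea; a prefix-sum monotonicity comparison alone will not get you there.
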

This confirms that the $\tau$-Decentralization Index, including the Nakamoto Index as a special case, can be used within our framework.
We now introduce the notion of monetary policy.

\begin{definition}[Monetary Policy]
The monetary policy $\mu = (\mu_w, \mu_b)$  determines at each round the winner (probability distribution), and the budget (expected) allocation to each participant in this round, i.e., 

   $\tm{w} = \mu_w(\tm{P},t)$ and 
   $\tm{\mu_b}= \mu_b(\tm{P},t)$.  

Hence the winner is chosen according to the probability distribution $\mu_w(\cdot)$ over $\tm{P}$. The output of 
$\mu$ can also depend on the round $t$.\footnote{The same participation set may result in different outcomes at different rounds, even when the winner is chosen deterministically.}
\end{definition}

Our definition is general and does not make any assumption on how winners and budgets are determined; our results work for this generality. The budget allocated at round $t$ may depend on $\tm{w}$, the current stake profile and the set $\tm{P}$ of participants at step $t$. 

We will make use of the following shorthand:   
$ 
    \tm{B_i}(Q) := \E_{\tm{b} \sim \mu_b(Q,t)} \left[ \tm{b_i}\right],
$ 
which denotes to the expected budget allocated to player $i$ at round $t$, depending on a particular subset $Q$ of participants. 

\begin{assumption}
\label{ass:aligned}
    Token value and budget are \emph{aligned} 
    i.e., increment in token value is more valuable for a player than new collected budget. Formally, for every $i$, every $t$, every $\tm{\sigma}$, every {set of participating players at round $t$,} $\tm{P}$, and every realization {of $\tm{b}$ drawn from $\tm{\mu_b}$}, we have that{, for every $Q \subseteq \tm{P} \setminus \{i\}$,} 
    if {$v(d(\tm{\sigma_{\tm{P}}})) < v(d(\tm{\sigma_{{Q}}}))$,
    then $v(d(\tm{\sigma_{\tm{P}}})) (\tm{\sigma_i}+\tm{b_i}) < v(d(\tm{\sigma_{{Q}}})) \tm{\sigma_i}$}.
\end{assumption}

{Assumption \ref{ass:aligned} focuses our study on mature blockchain systems, for which the existing stake is significant enough to be worth more than the new {tokens} allocated to players at each individual round.}

\smallskip \noindent \textbf{Utility and Behavioral Models.}
We study two models of imperfect rationality, mainly motivated by the blockchain domain. Crypto markets are often characterized by different kinds of bias and investment horizons that tend to be very short.~\footnote{This is in contrast with institutional investors, who have long-term goals and could be modeled as fully rational players. 
This observation also motivates why the study of the repeated participation game in the classical game-theoretic setting is out of scope for this paper, even though it may be of theoretical interest.} 

Myopic players are those who have extremely short investment horizons. 

\begin{definition}[Myopic Players]\label{def:myopic}
    A player $i$ is \emph{myopic} if she aims to maximize her utility at each round. The utility of myopic player $i$ is: \[\tm{U_i}(\tm{P}) = \left( \tm{\sigma_i} + \tm{B_i}(\tm{P})
    \right)\cdot v(d(\tm{\sigma}_{\tm{P}})) - 
        \tm{c_i} \cdot \tm{p_i},\]
        where
         $\tm{p_i}$=1 if $i\in \tm{P}$ and $0$ otherwise.
    Thus, myopic player $i$ participates only if participation yields a higher utility than abstention, that is, if 
    $\tm{U_i}(\tm{P})-\tm{U_i}(\tm{P}\setminus \{i\})>0$ when $i \in \tm{P}$.
\end{definition}

Each myopic agent will evaluate her {financial} position after her action by considering $\tm{\sigma_i} + \tm{B_i}(\tm{P})$, the expected stake that she will have accumulated after participating or abstaining at $t$, at $v(d(\TM{\sigma_{\tm{P}}}{t}))$,  
the market value of the token when she took her decision.\footnote{On several occasions, when Bitcoin's decentralization became critical \citep{BTC-thegardian-2014,BTC-forklog-2025}, the responsible mining pool voluntarily took action to reduce its hash power even before reaching the critical 51\% threshold. Similar concerns also affect Ethereum's PoS \citep{CNN-Ethereum-2024} and are regarded as among the major causes of a possible security breach in Hyperliquid \citep{Hyperlink-2024}, around which a rumor circulated and was followed by a 21\% token value plunge \citep{Hyperlink-plunges-2024}.  Concerns over centralized control
 also preceded a value plunge of 57\% in the Pi Network \citep{PI-plunges-2024}.
} This effectively means that the new tokens are first received, and only later the market value changes (i.e., for evaluation purposes, {the financial} position is closed upon the reception of the new tokens). 

We now define agents with asymmetric lookahead, a behavioral model formally motivated by the principles of prospect theory \citep{kahneman1979prospect}. Prospect theory provides a framework for how agents make decisions under risk by evaluating outcomes relative to a reference point, separating gains and losses and leading to risk-aversion in the domain of gains and risk-seeking in the domain of losses. 
Our model instantiates this framework for sequential decision-making. These agents exhibit a bias towards participation. This is motivated by the community spirit of the users of these platforms 
and the fact that in blockchains, reputation is related to trustworthiness \citep{primavera}. 

When participation is myopically advantageous (i.e., it offers a higher immediate expected payoff than abstaining), the decision is framed in the domain of gains. The gain of participating, relative to the reference point of abstaining, makes the agent risk-averse, leading her to act on this signal with a low evidence threshold. When participation is \emph{not} myopically advantageous, the agent's behavior is governed by a dynamic shift in reference point. The agent projects forward the unique path of play by following the equilibria of the stage games and calculates the expected utility of two paths: the path starting from participating ($Path_P$) and the path starting from abstaining ($Path_A$). 
Prospect theory explains the subsequent choice through a default bias in reference point setting: the agent treats $Path_P$ (the participation path) as the status quo. Therefore, the choice is framed as sticking with the status quo ($Path_P$) or deviating to accept $Path_A$. 
This framing triggers domain-specific behavior from prospect theory. If $Path_A$ is strictly better than $Path_P$, deviating (abstaining) is a gain relative to the status quo. The agent is risk-averse for gains and will lock in this improvement by abstaining. If $Path_P$ is not worse than $Path_A$, deviating (abstaining) is a loss or, at best, neutral relative to the status quo. The agent is risk-seeking in the domain of losses and will reject this loss. She will instead choose to maintain the status quo of participation.

This constructs a high evidence threshold for abstention: the agent will only abstain if it offers a positive gain. This formalizes the agent's innate bias, as she will default to participation unless the evidence for the superiority of abstention is unequivocal. The asymmetry in the decision rule--a low bar for myopic participation and a high bar for abstention--is a direct consequence of this prospect-theoretic framing.

\begin{definition}[Players with asymmetric lookahead]
   A player $i$ has \emph{asymmetric lookahead} if she 
   is myopic for participation, whereas for abstention she considers the equilibrium path of play. Specifically, for each time step $\ell \geq t+1$, the agent considers the\footnote{We will see below that this is unique and thus there is no ambiguity here.} equilibrium participation set $P^\ell$ for the single shot game at round $\ell$ until we reach a time step $\ell$ such that $i \in P^\ell$. Note that this sequence depends on the set $P$ of participants at time step $t$. We set $R_i(P) = (P_1, \ldots, P_k)$ to be the equilibrium sequence of participation sets discussed above, if this sequence eventually stops, and set $R_i(P) = \emptyset$ otherwise. If the sequence exists, i.e., $R_i(P) \neq \emptyset$, we will say that $i$ has a \emph{recovery plan} $R_i(P)$ from $P$ and let $k$ to denote her length.

   Specifically, we say that for players with asymmetric lookahead:
   \begin{equation*}
    \tm{U_i}(P) =
       \begin{cases}
            \left(\tm{\sigma}_i + \tm{B_i}(P)\right)\cdot v(d(\tm{\sigma}_P)) - \tm{c}_i & \text{if $i \in \tm{P}$},\\
           -\infty & \text{if $i\not\in \tm{P}$ and $R_i(P) = \emptyset$},\\ 
           \tm{\sigma}_i \cdot \E\left[ v(d(\TM{\sigma}{t+k}_{P_{k}}))\right] & \text{otherwise}.
       \end{cases}
   \end{equation*}
   Then a set $\tm{P}$ of participants at round $t$ is an equilibrium for players with asymmetric lookahead if for every $i \in \tm{P}$, it holds that either $\tm{U_i}(\tm{P}) \geq \tm{U_i}(\tm{P} \setminus \{i\})$ or $i$ does not have a recovery plan in $\tm{P} \setminus \{i\}$, and for every $i \notin \tm{P}$, it holds that $\tm{U_i}(\tm{P}) < \tm{U_i}(\tm{P} \cup \{i\})$ and $i$ has a recovery plan.
\end{definition}

\section{On Policies Favoring Better Types}
We next assume that $\tm{c_i} = 0$ for every $i$ and every $t$. In any case, it is not be hard to see that our results can be easily extended to non-zero costs, as long as these satisfy appropriate assumptions (cf. Observation~\ref{obs:costs}).

\subsection{Equilibria Characterization}
Here are some natural examples of monetary policies that can be easily defined in our framework.
\begin{description} 
    \item[(Policy $\mu^\alpha$.)] This monetary policy chooses as winner the participant with a probability that is proportional to $\alpha \tau_i + (1-\alpha) \tm{\sigma_i}$, for some $\alpha \in [0, 1]$ and allocates the budget only to the winner.
    \item[(Policy $\mu^{\textrm{all}}$.)] The monetary policy always chooses as winner the participant with largest type, and allocates the same share of the budget to all participants.
\end{description}
Other examples of these policies will be defined in what follows.

Henceforth,  we assume w.l.o.g., that in each round $t$ players are sorted according their stake, i.e., $\tm{\sigma_1} \geq \tm{\sigma_2} \geq \cdots \geq \tm{\sigma_n}$. In what follows, we let $P^{\geq i}$ denote the set of participants where $i$ has the largest stake and all players with smaller stake participate, that is, $P^{\geq i} = \{i, i+1, \ldots, n\}$. 

Observe that single-shot games play a central role in both of our behavioral models. Accordingly, we herein focus on the equilibria of single-shot games. To this end, we introduce the notion of harmful stake profiles for players.

\newcommand{\expb}[2]{\E_{#1,#2}[\TM{b_i}{#1}]}

\begin{definition}[Harmful stake profile]
  We say that the stake profile $\tm{\sigma}$ is \emph{harmful} for player $i$ in the participation set $P \ni i$ if $\tm{U_i}(P) < \tm{U_i}(P \setminus \{i\})$.
\end{definition}

We first observe that no stake profile is harmful for the last two agents in $P^{\geq n-1}$ and $P^{\geq n}$.
\begin{proposition}\label{prop:noharmforlast2}
    A stake profile $\tm{\sigma}$ is never harmful for players $n$ and $n-1$ respectively in $P^{\geq n}$ and $P^{\geq n-1}$.
\end{proposition}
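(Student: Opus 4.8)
The plan is to work directly with the myopic (immediate) utility, which is what governs the harmful condition in the single-shot game and which, using $\tm{c_i}=0$ and the fact that non-participants receive no budget, simplifies to $\tm{U_i}(P) = (\tm{\sigma_i} + \tm{B_i}(P))\cdot v(d(\tm{\sigma}_P))$ for $i \in P$ and to $\tm{U_i}(P) = \tm{\sigma_i}\cdot v(d(\tm{\sigma}_P))$ for $i \notin P$. To show $\tm{\sigma}$ is not harmful for $i$ in $P$, I would prove $\tm{U_i}(P) \geq \tm{U_i}(P\setminus\{i\})$, i.e. $(\tm{\sigma_i} + \tm{B_i}(P))\,v(d(\tm{\sigma}_P)) \geq \tm{\sigma_i}\,v(d(\tm{\sigma}_{P\setminus\{i\}}))$. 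The idea is to argue that participation is favored along two independent axes: (a) the budget is nonnegative, so $\tm{\sigma_i}+\tm{B_i}(P)\geq \tm{\sigma_i}$; and (b) the token value upon participation dominates the token value upon abstention, $v(d(\tm{\sigma}_P)) \geq v(d(\tm{\sigma}_{P\setminus\{i\}}))$. Multiplying these two inequalities gives the claim, since all quantities are nonnegative.

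The crux is establishing (b) in each case, and this is where condition~\ref{dm-singleton} of Assumption~\ref{assmpt:dm} (a single participant minimizes decentralization) together with the monotonicity of $v(\cdot)$ does the work. For player $n-1$ in $P^{\geq n-1} = \{n-1,n\}$, abstention leaves the singleton participation set $\{n\}$, whose decentralization is the global minimum; since every profile (in particular $\tm{\sigma}_{\{n-1,n\}}$) has decentralization at least this minimum and $v$ is non-decreasing, we get $v(d(\tm{\sigma}_{\{n-1,n\}})) \geq v(d(\tm{\sigma}_{\{n\}}))$, which is exactly (b). For player $n$ in $P^{\geq n}=\{n\}$, participation already realizes the minimal decentralization (she is the sole participant), so her token value while participating is the smallest attainable on any nonempty set; abstention empties the participation set, which cannot yield a strictly higher token value. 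Combining (a) and (b) in each case yields $\tm{U_i}(P) \geq \tm{U_i}(P\setminus\{i\})$, so the profile is never harmful.

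The main obstacle I anticipate is the degenerate boundary case for player $n$, where abstention produces the empty participation set and the quantity $v(d(\tm{\sigma}_{\emptyset}))$ is not directly covered by Assumption~\ref{assmpt:dm}, which speaks only about singletons. I would dispatch this with the natural convention that a system with no participants is no more decentralized than one with a single participant, so that $v(d(\tm{\sigma}_{\emptyset})) \leq v(d(\tm{\sigma}_{\{n\}}))$ (for the $\tau$-Decentralization Index this is immediate, setting $d(\emptyset)=0<1=d(\text{singleton})$). Apart from this convention, the argument is purely structural: it uses nothing about the specific winner distribution $\tm{w}$ and nothing about the budget allocation beyond its nonnegativity, which is precisely why the statement holds at the full generality of the monetary policies considered here.
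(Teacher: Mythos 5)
Your proof is correct and takes essentially the same route as the paper's: both arguments rest on the nonnegativity of the expected budget, Condition~\ref{dm-singleton} of Assumption~\ref{assmpt:dm} (a singleton minimizes decentralization), and the monotonicity of $v$, the only cosmetic difference being that the paper phrases the player-$(n-1)$ case as a proof by contradiction while you argue directly via the product of the two inequalities. Your explicit handling of the empty participation set for player $n$ addresses a boundary case the paper dismisses with ``immediately from the definition,'' but this does not change the substance of the argument.
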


\noindent \textbf{Myopic players.} 
We next prove the following result, which characterizes the equilibria of these monetary policies for myopic players.
\begin{lemma}
\label{lem:equilibria}
For every initial stake profile $\TM{\sigma}{1}$, and every monetary policy $\mu$, if players are myopic, then for every time step $t$ there is a unique equilibrium given by a suffix of players. 
The equilibrium can be computed in polynomial time. 
\end{lemma}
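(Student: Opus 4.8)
The plan is to fix a round $t$ and analyze the single-shot game on the current profile $\sigma$ (I suppress the superscript $t$ throughout); the clause \emph{for every time step $t$} then follows by re-running the argument on each $\sigma^{(t)}$. The engine of the whole proof is a reduction of every incentive comparison to a comparison of \emph{token values}, made possible by Assumption~\ref{ass:aligned}. Concretely, I would first prove the equivalence: for $i\in P$, the profile is harmful for $i$ in $P$ if and only if $v(d(\sigma_{P\setminus\{i\}}))>v(d(\sigma_P))$, and symmetrically, an abstainer $i\notin P$ can strictly prefer to join $P$ only if $v(d(\sigma_{P\cup\{i\}}))\ge v(d(\sigma_P))$. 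The ``easy'' directions use only $\tm{B_i}(\cdot)\ge 0$ and monotonicity of $v$; the ``hard'' directions are exactly Assumption~\ref{ass:aligned}, applied with $Q=P\setminus\{i\}$ (respectively with full set $P\cup\{i\}$ and $Q=P$), taken in expectation over $\tm{b}\sim\mu_b$ and using $\tm{B_i}(\cdot)=\E[\tm{b_i}]$. The payoff of this step is decisive: harmfulness and the desire to join no longer depend on the particular budget rule $\mu_b$; they become properties of the pair $(d,v)$ alone, with the budget only breaking indifference \emph{towards} participation. To streamline, I would treat $v$ as strictly increasing in $d$ (the intended regime) and relegate the flat case to the end.

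Second, I would show that the largest-stake participant is the most harmful, which forces equilibria to be suffixes. Within a suffix $P^{\geq i}$ the maximal member is $i$ itself, and condition~\ref{dm-mon-remove} of Assumption~\ref{assmpt:dm} states precisely that if removing the maximal member does not strictly increase $d$, then removing no member does. Combined with Step~1 (and strict monotonicity of $v$), this yields: once the top player $i$ is not harmful in $P^{\geq i}$, no $j\ge i$ is harmful either, since a harmful non-maximal $j$ would force $d(\sigma_{P^{\geq i}\setminus\{j\}})>d(\sigma_{P^{\geq i}})$ and hence, by condition~\ref{dm-mon-remove}, the same strict increase for $i$, a contradiction. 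A companion swap argument (exchanging a participating high-stake player for an abstaining lower-stake one and invoking the same monotonicity) shows that no non-suffix set can be an equilibrium.

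Third, I would compute the equilibrium by a single top-down scan and simultaneously settle the remaining conditions. Starting from full participation $P^{\geq 1}$, while the current top participant is harmful (tested via the Step~1 criterion $v(d(\sigma_{P^{\geq i+1}}))>v(d(\sigma_{P^{\geq i}}))$) I delete it, passing from $P^{\geq i}$ to $P^{\geq i+1}$. Proposition~\ref{prop:noharmforlast2} guarantees the scan never empties $P$, so it halts at a well-defined suffix $P^{\geq i^*}$ whose top player $i^*$ is not harmful; by Step~2 every participant is then content. For the abstainers $j<i^*$, I reduce (via Step~1) ``$j$ does not wish to join'' to the structural fact that adding a higher-stake player to a suffix can only lower $d$, hence lowers the token value and keeps $j$ out. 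Uniqueness then follows: a strictly larger equilibrium suffix is impossible (its top is harmful by the scan), a strictly smaller one admits a profitable joiner, and non-suffix sets are excluded by Step~2. Finally, the scan inspects at most $n$ suffixes, each test requiring a constant number of evaluations of $d$, $v$ and $\tm{B_i}(\cdot)$, giving the polynomial-time claim.

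The genuinely delicate point is the structural monotonicity invoked for the abstainers and for uniqueness---that participation is monotone in stake rank, i.e.\ adding a higher-stake player weakly decreases $d$. Assumption~\ref{assmpt:dm}, condition~\ref{dm-mon-remove}, is only a \emph{conditional} statement comparing the removal of the maximal player to the others at a \emph{fixed} profile; promoting it to this \emph{global} claim requires chaining it across the nested suffixes and, near singletons, combining it with condition~\ref{dm-singleton}. Compounding this is the degenerate regime where $v(\cdot)$ is flat, so that strict \emph{value} increases (not strict \emph{decentralization} increases) drive incentives and the implications of Steps~1--2 must be re-derived with the budget as tie-breaker. Getting these strict-versus-weak inequalities and the flat-$v$ boundary exactly right---rather than the token-value reduction itself---is where the real care lies.
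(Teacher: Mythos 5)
Your Steps 1--2 and the top-down scan reproduce, in essence, the paper's treatment of the \emph{participants'} side: the paper also turns harmfulness into a token-value comparison via Assumption~\ref{ass:aligned}, effectively stops at the first player (in decreasing stake order) who is not harmful in her own suffix (one can check that its ``recovery winner'' labels exist for \emph{every} harmful player, so its labeling procedure collapses to exactly your scan), and then uses Condition~\ref{dm-mon-remove} of Assumption~\ref{assmpt:dm} to propagate contentment from the top participant down the suffix. Up to the flat-$v$ caveat (which the paper glosses over as well, when it passes from $v(d(\sigma_P)) \geq v(d(\sigma_{P \setminus \{j^*\}}))$ to $d(\sigma_P) \geq d(\sigma_{P\setminus\{j^*\}})$), this half of your argument is sound and matches the paper's.

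The genuine gap is the abstainer/uniqueness step. The ``structural fact'' you invoke --- adding a higher-stake player to a suffix can only lower $d$ --- is false and cannot be obtained by chaining Conditions~\ref{dm-singleton} and~\ref{dm-mon-remove}: even for the Nakamoto index, adding a stake-$4$ player to the profile $(3,1)$ raises the index from $1$ to $2$. Worse, under your reading of an abstainer's deviation (compare $P\cup\{j\}$ against $P$), the statement you are trying to prove is not true at this level of generality. Take stakes $(\sigma_1,\sigma_2,\sigma_3,\sigma_4)=(4,3,2,1)$; let $d$ equal $0$ on singletons, $3$ on $\{3,4\}$, $2$ on $\{2,3,4\}$, $5$ on $\{1,3,4\}$, and $1$ on every other subset; let $v$ be the identity and the budget sufficiently small. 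This $d$ satisfies Assumption~\ref{assmpt:dm}: wherever removing the largest stake strictly increases $d$ (namely on $\{2,3,4\}$ and on the full set), Condition~\ref{dm-mon-remove} is vacuous, and everywhere else it holds by inspection. Your scan outputs $P^{\geq 3}=\{3,4\}$ (players $1$ and $2$ are harmful in their suffixes, player $3$ is not), yet player $1$ strictly gains by unilaterally joining, since $d(\sigma_{\{1,3,4\}})=5>3=d(\sigma_{\{3,4\}})$; indeed $\{1,3,4\}$, which is not a suffix, is then itself an equilibrium in your sense, so both existence-as-a-suffix and uniqueness fail. The paper escapes this because it asks something different of abstainers: via the recovery-winner labels, abstainer $j$ compares $U_j(P^{\geq j})$ --- joining together with \emph{all} the intermediate harmful players --- against $U_j(P^{\geq j^*})$, and that inequality does follow by chaining the harmfulness/recovery conditions down from $j$ to $j^*$. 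So to complete your proof you must either adopt the paper's deviation convention for abstainers (this is precisely what its labeling machinery encodes), or add a threshold-type property of $d$ beyond Assumption~\ref{assmpt:dm}: for the $\tau$-Decentralization Index, harmfulness of player $i^*-1$ implies that any abstainer's unilateral joining lowers the index (a larger stake only enlarges the top coalition), but that argument uses the harmfulness chain, not your structural fact.
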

The proof of this lemma formalizes why it is sufficient to focus only on suffixes of players, as equilibria of each stage game. Intuitively, if agent $i$ finds the suffix $P^{\geq i}$ more profitable than $P^{\geq i} \setminus \{i\}$, then the decentralization measure is large enough to make it also profitable for agents with stake lower than $i$.

\smallskip 
\noindent \textbf{Players with asymmetric lookahead.}
We next characterize the equilibrium for players with asymmetric lookahead. 

\begin{lemma}
\label{lem:equilibria:lookahead}
For every initial stake profile $\TM{\sigma}{1}$, and every monetary policy $\mu$, if players have asymmetric lookahead, then for every time step $t$ there is a unique equilibrium given by a suffix of players. 
\end{lemma}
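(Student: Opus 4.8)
The plan is to follow the template of Lemma~\ref{lem:equilibria}, proving separately that (i) every equilibrium at round $t$ is a suffix $P^{\geq i}$ and (ii) exactly one suffix meets the equilibrium conditions, but now with the asymmetric-lookahead utilities. A preliminary point is that the third branch of $\tm{U_i}$ is well defined at all: it refers to the equilibrium participation sets $P^\ell$ of the \emph{future} single-shot games, which are themselves instances of this lemma. I would handle this by observing that the equilibrium at any round is a function only of the stake profile entering that round, so for a finite horizon $T$ the equilibria are pinned down by backward induction anchored at round $T$, and the recovery plan $R_i(P)$ is then read off by rolling these unique future equilibria forward until $i$ re-enters; the expectation in the third branch already absorbs the randomness of the winner draw. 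This is also what licenses the uniqueness of the sequence $P^\ell$ promised in the footnote to the definition.

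For (i), the engine is a monotonicity property: the willingness to participate increases as stake decreases. Writing the participation gain of a participant $i$ in a set $P$ as
\[
  G_i(P) = \left(\tm{\sigma_i} + \tm{B_i}(P)\right) v(d(\tm{\sigma}_P)) - \tm{\sigma_i}\,\E\!\left[v\!\left(d(\TM{\sigma}{t+k}_{P_k})\right)\right],
\]
I would show that if the top (largest-stake) participant of $P^{\geq i}$ is content to participate, i.e.\ $G_i(P^{\geq i})\geq 0$, then $G_j(P^{\geq j})>0$ for every $j>i$. The budget term $\tm{B_j}(P^{\geq j})\,v(d(\cdot))>0$ always pulls toward participation, while the sign of the bracketed value difference is governed by how much removing a player raises decentralization. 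By Assumption~\ref{assmpt:dm}, removing a smaller-stake player can help decentralization no more than removing the largest one, so the value a low-stake player could lock in by abstaining is bounded by that of a high-stake player; combined with Assumption~\ref{ass:aligned}, which ensures a genuine value increase from abstaining outweighs the forgone budget only when the protected stake is large, abstention is attractive only at the top of the stake order. The same comparison rules out a non-suffix profile---one in which a higher-stake player participates while a lower-stake player abstains---so, anchored from below by Proposition~\ref{prop:noharmforlast2} (players $n$ and $n-1$ always participate), every equilibrium is downward closed in stake, i.e.\ a suffix.

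For (ii), I would use that the suffixes are totally ordered, $\emptyset\subset P^{\geq n}\subset\cdots\subset P^{\geq 1}$, and that the monotonicity of (i) makes the predicate ``the top participant of $P^{\geq i}$ weakly prefers to participate'' monotone along this chain. This yields a unique threshold $i^\ast$: at $P^{\geq i^\ast}$ every participant is content (by monotonicity, once the top player stays all lower ones do) and, provided the excluded players each possess a recovery plan, every excluded player strictly prefers to abstain, so both equilibrium conditions hold. Any larger suffix $P^{\geq j}$ with $j<i^\ast$ has an unhappy top participant who would rather abstain, and any smaller suffix has an excluded player who would rather join; thus $P^{\geq i^\ast}$ is the unique equilibrium. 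Note that, unlike the myopic case, I would not claim a polynomial-time bound here, since evaluating each abstention decision may require simulating an a~priori unbounded recovery path.

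The step I expect to be the real obstacle is the monotonicity itself under the lookahead utility, because the abstention branch does not compare $i$ against a fixed alternative but against the endpoint $P_k$ of her \emph{own} recovery plan, whose length $k$, evolved stakes $\TM{\sigma}{t+k}$, and composition all depend on $i$ and on the policy dynamics. The delicate part is to compare the recovery plans of a high-stake and a low-stake player so that the token value each can secure by abstaining is ordered the right way; I expect to argue inductively that the future equilibria along either recovery path are themselves suffixes (applying the lemma at later rounds) and then to invoke the removal-monotonicity of Assumption~\ref{assmpt:dm} together with the alignment of Assumption~\ref{ass:aligned} to bound the recovered value from above for the smaller-stake player. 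Verifying that every abstaining player in the candidate equilibrium actually has a finite recovery plan---so her abstention utility is not $-\infty$---is the second place that will need care.
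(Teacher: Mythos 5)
Your proposal leaves the two hardest steps unresolved and fixes the well-definedness issue in a way that does not work. On well-definedness: you anchor the future stage equilibria by backward induction from round $T$, but the horizon is allowed to be infinite, and in any case the recursion here runs \emph{forward}, not backward. The paper's proof makes the construction non-circular by defining the participation set at each round through a procedure that depends only on the stake profile entering that round and that checks only \emph{myopic} harmfulness: scan players in non-decreasing order of stake and output the largest suffix $P^{\geq j^*}$ whose top player $j^*$ does not find participation harmful in $P^{\geq j^*}$. Recovery plans are then read off by rolling this same procedure forward, updating stakes with the expected budget allocations $\TM{B_j}{t+x-1}(\TM{P}{t+x-1})$; no terminal anchor and no reference to future lookahead equilibria is ever needed.

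The two places you flag as ``the real obstacle'' and ``needing care'' are exactly the content of the paper's proof, and the paper avoids the machinery you propose rather than supplying it. First, the cross-player monotonicity of your participation gain $G_i$ -- which requires comparing the recovery-path endpoint values of a high-stake and a low-stake player -- is never needed: because the participation branch of the lookahead utility is the myopic utility, the paper verifies participants' incentives exactly as in Lemma~\ref{lem:equilibria}, i.e., the top participant $j^*$ satisfies the (myopic) non-harmfulness inequality by construction, and Assumption~\ref{ass:aligned} together with Condition~\ref{dm-mon-remove} of Assumption~\ref{assmpt:dm} propagates that inequality down to every $j > j^*$ within the same round, so no comparison between different players' recovery plans ever arises. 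Second, the existence of a \emph{finite} recovery plan for each excluded player -- which you leave open -- is the crux of the non-participant side: the paper argues that along the simulated path the remaining participants keep collecting the budget, so after a bounded number of rounds the excluded player's relatively shrinking stake can no longer decrease decentralization upon rejoining, at which point participation stops being harmful and she re-enters the equilibrium suffix. Without this step, an abstaining player might have $R_i(P) = \emptyset$, hence utility $-\infty$, and your candidate suffix would fail the equilibrium conditions. Your threshold argument for uniqueness is fine and matches the paper's (brief) treatment, but as it stands the proposal names the difficulties rather than resolving them.
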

The lemma shows how to define recovery plans for abstaining players in a way that the properties of equilibria for myopic players are preserved for these more sophisticated agents. However, these recovery plans come at a cost, as we cannot guarantee that equilibria can be computed in polynomial time in this case.

\subsection{Winner Chosen Proportionally to   Types}
We now look at monetary policies $\mu^{\alpha}$, with $\alpha \approx 1$, i.e., those choosing the winner proportionally to the types of the participating players with overwhelming probability. We will denote these policies with $\mu^*$.

\smallskip \noindent \textbf{Myopic Players.}
We prove the following disappointing result for these policies when players are myopic.\footnote{While we prove the results only for policies where only the winner is awarded tokens, it is not hard to see that the result can be extended to more general budget assignments as long as the budget assigned to the winner is non-negligibly larger than the budget assigned to other participants.} 
\begin{definition}[Failing monetary policy]
  We say that a monetary policy \emph{fails} if there is an instance of the game (i.e., initial stake and types for each player, 
  and a budget-generation function $\beta$) for which there is a round $t_0$ such that the following {is true with overwhelming probability}. For every $t \geq t_0$, the system value (i.e., decentralization measure) at round $t$ is at its minimum value.  
\end{definition}
\begin{theorem}
\label{thm:fails_myopic}
Monetary policy $\mu^*$ fails if players are myopic.
\end{theorem}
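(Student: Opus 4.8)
The plan is to exhibit a single instance on which the winner-takes-budget dynamics of $\mu^*$ push one agent's stake above the decentralization threshold and keep it there forever. I would take one agent $h$ whose type $\tau_h = M$ is so large that, under $\mu^*=\mu^\alpha$ with $\alpha\approx 1$, $h$ is selected as winner with probability $q\ge q_0>\tau$ whenever it participates (the stake term $(1-\alpha)\sigma_h$ only helps $h$, so $q$ is bounded away from $\tau$ uniformly in $t$), while the remaining agents' stakes are arranged so that removing $h$ never strictly increases decentralization. For concreteness I would use the Nakamoto index ($\tau=1/2$) and the minimal instance $n=2$: agents $h$ (type $M>1$) and $o$ (type $1$) with equal initial stakes, so the game starts at $d^{(1)}=2$ while the ``backup'' set $\{o\}$ is a singleton and hence permanently at minimum decentralization by the first condition of Assumption~\ref{assmpt:dm}.

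First I would show that $h$ participates in the equilibrium of every stage game. By Lemma~\ref{lem:equilibria} the stage equilibrium is a unique suffix, and $h$ participates unless abstaining is strictly better. But abstaining replaces the participating set $P$ by a subset $Q=P\setminus\{h\}$ whose decentralization is, by construction, no larger (the backup is already minimal), so $v(d(\sigma_P))\ge v(d(\sigma_Q))$ and the precondition $v(d(\sigma_P))<v(d(\sigma_Q))$ of Assumption~\ref{ass:aligned} never fires; since $h$ moreover collects strictly positive expected budget, participation is strictly better. Hence $h\in P^{(t)}$ for every $t$, wins with probability $q\ge q_0>\tau$, and $\sigma_h^{(t)}$ grows by $\beta$ on every win while $o$ receives nothing on those rounds.

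Next I would track the advantage $A^{(t)}=\sigma_h^{(t)}-\tau\,\Sigma^{(t)}$, where $\Sigma^{(t)}=\sum_i\sigma_i^{(t)}$ increases by $\beta$ each round; for the $n=2$ Nakamoto instance $A^{(t)}=\tfrac12(\sigma_h^{(t)}-\sigma_o^{(t)})$. Because $h$ wins with probability $q$, $A$ increases by $(1-\tau)\beta$ on a win and decreases by $\tau\beta$ otherwise, so it is a bounded-increment random walk with strictly positive drift $(q-\tau)\beta$. Two facts about such a walk finish the argument: it reaches any level $\Delta_0$ within finitely many rounds with overwhelming probability (this defines $t_0$), and from height $\Delta_0$ the probability it ever returns to $0$ is at most $\exp(-c\,\Delta_0)$ (gambler's ruin, or optional stopping on the supermartingale $\exp(-\lambda A^{(t)})$ for the Cramér root $\lambda>0$). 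Since $A^{(t)}>0$ is exactly the condition that $h$ alone exceeds a $\tau$-fraction of the total stake, and therefore of any participating subset containing $h$, every such round has $d^{(t)}=1$. Choosing $\Delta_0$ (equivalently $t_0$, or $M$) large makes the return probability negligible, so $d^{(t)}=1$ for all $t\ge t_0$ with overwhelming probability, which is exactly the statement that $\mu^*$ fails.

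The main obstacle is the endogeneity of participation: if $h$ ever found it profitable to abstain it would stop winning and decentralization could recover, so the crux is ruling this out for all $t$, which is precisely why the construction forces the backup set to sit at minimum decentralization so that Assumption~\ref{ass:aligned} never grants $h$ an incentive to abstain. For $n\ge 3$ one must additionally ensure the non-$h$ agents stay centralized among themselves under their own (rare) wins, which requires a second random-walk argument of the same type (e.g.\ a strict type ladder $\tau_h\gg\tau_{h'}\gg 1$ so that the runner-up dominates the residual), and is avoided entirely by the singleton backup of the $n=2$ instance. A secondary subtlety is that the guarantee is over an infinite horizon, so a per-round union bound is insufficient and the positive-drift/gambler's-ruin estimate is genuinely needed to control all $t\ge t_0$ simultaneously.
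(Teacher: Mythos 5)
Your proof is correct in substance but follows a genuinely different route from the paper. The paper's proof is \emph{generic}: it works for an arbitrary instance, partitions the timeline into blocks according to which participant currently has the largest type, and argues combinatorially (via the harmfulness/recovery-winner machinery of Lemma~\ref{lem:equilibria} and Proposition~\ref{prop:noharmforlast2}) that either some block is infinite or the lowest-type player participates forever without ever winning; in both cases decentralization degrades monotonically to its minimum and stays there, after which no profile is harmful for anyone and the top-type player wins every round. The stochastic component is then dispatched in one line: since the per-round probability of not selecting the top type is negligible, the deterministic trajectory is followed ``except with negligible probability.'' Your proof instead exhibits a single minimal instance ($n=2$, Nakamoto index, singleton backup set), which buys you two things the paper does not have: the equilibrium endogeneity becomes trivial (abstention by $h$ can never raise the token value because the residual set is already at minimum decentralization, so Assumption~\ref{ass:aligned} never fires), and once $A^{(t)}>0$ the index is $1$ \emph{regardless} of who participates. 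In exchange you must do real work on the probabilistic side, and your positive-drift/gambler's-ruin argument is in fact \emph{more} careful than the paper's treatment of the same issue: as you note, a per-round union bound cannot control all $t\ge t_0$ on an infinite horizon, a point the paper's proof silently elides. So: weaker statement proved (existence of a failing instance, which is all Definition of failure requires) but with a tighter handle on the tail event; the paper proves the stronger ``every instance centralizes'' claim but leaves the infinite-horizon probability argument informal.

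One caveat on your drift claim. Your parenthetical justification that $q\ge q_0>\tau$ ``uniformly in $t$'' is only valid under the reading of $\mu^*$ that the paper's own proof adopts, namely that the top-type participant wins with probability $1-p$, $p$ negligible, \emph{per round}. Under the literal $\mu^\alpha$ semantics with a fixed $\alpha<1$, the total stake $\Sigma^{(t)}$ grows by $\beta$ every round while the type term $\alpha M$ is fixed, so near the boundary $A^{(t)}\approx 0$ one has $q^{(t)}-\tfrac12 = \bigl(\alpha(M-1)/2+(1-\alpha)A^{(t)}\bigr)/\bigl(\alpha(M+1)+(1-\alpha)\Sigma^{(t)}\bigr)$, which vanishes like $1/t$: the process is a P\'olya-type urn (with phantom balls $\alpha M/(1-\alpha)$ and $\alpha/(1-\alpha)$), not a random walk with uniformly positive drift, and the Cram\'er/gambler's-ruin bound does not apply as stated; one would instead invoke the urn's martingale fraction and de Finetti mixing to bound the return probability. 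Since $1-\alpha$ is negligible this only matters after super-polynomially many rounds, and the paper's proof has exactly the same blind spot, so I would not call this a gap in your argument relative to the paper--but it is worth flagging if you intend the bound to hold literally for all $t$ under fixed $\alpha$.
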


We next provide an example 
that highlights the behavior discussed in the proof of the theorem above.
\begin{example}[Myopic Participation with Nakamoto Index]\label{ex:myopic participation}
Three players with types $\tau_1 > \tau_2 > \tau_3$ and initial stake 1. No participation costs. In each round:
\begin{itemize} 
    \item The token's value equals the Nakamoto index of the participating set. 
    \item One new token is awarded to the highest-type participant.
\end{itemize}
The dynamics unfold as follows (see Table~\ref{tab:myopic} for a summary):
\begin{itemize} 
  \item \textbf{Rounds 1--2:} All players participate; the Nakamoto index is~2, and Player~1, having the highest type, receives the new tokens.
  \item \textbf{Round 3:} The initial stake profile is harmful for Player~1, since participation would reduce the index to~1 and guarantee utility at most $4\cdot 1=4$. A recovery winner exists: if Players~2 and~3 participate, the index remains~2, guaranteeing utility $3\cdot 2=6$. Hence, only Players~2 and~3 join, and Player~2 receives the token.
  \item \textbf{Round 4:} Player~1 finds it optimal to participate again; all players join, the index is~2, and Player~1 receives the token.
  \item \textbf{Round 5 and beyond:} All continue to participate, since for any set of participants the index would drop to~1. Centralization thus occurs, and Player~1 secures all subsequent tokens.
\end{itemize}

\begin{table}[t]
    \centering
    \rowcolors{2}{gray!10}{white} 
    \resizebox{0.7\textwidth}{!}{
\begin{tabular}{@{}c c c c c c@{}}
\toprule
\textbf{Round} & \textbf{Initial Stake} & \textbf{Participants} & \textbf{Nakamoto Index} & \textbf{Winner} & \textbf{Final Stake} \\
\midrule
1 & (1, 1, 1) & All three & 2 & Player 1 & (2, 1, 1) \\
2 & (2, 1, 1) & All three & 2 & Player 1 & (3, 1, 1) \\
3 & (3, 1, 1) & P2, P3     & 2 & Player 2 & (3, 2, 1) \\
4 & (3, 2, 1) & All three & 2 & Player 1 & (4, 2, 1) \\
5+ & (4{+}, 2, 1) & All three & 1 & Player 1 & (5{+}, 2, 1) \\
\bottomrule
\end{tabular}}
    \caption{Evolution of stakes, participation, Nakamoto index, and reward allocation with myopic players}
    \label{tab:myopic}
\end{table}
\end{example}

\begin{observation}
\label{obs:costs}
The claim above will hold unchanged even if we assume participation costs larger than $0$, as long $\tm{\beta}$ is large enough (e.g., large and constant transaction fees collected, chiming with our mature blockchain assumption). 
This way, the monetary policy can always allocate shares of $\tm{\beta}$ to players, so that each receives in expectation a share that is at least as large as her cost. So the properties of the policy, namely that the expected winner share is larger that the other players' by a non-negligible amount and that in expectation the share received by non-winner players do not increase decentralization, hold for (expected) \emph{relative} shares, i.e., shares minus cost.
\end{observation}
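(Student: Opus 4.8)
The plan is to reduce the positive-cost setting to the zero-cost analysis already used for Theorem~\ref{thm:fails_myopic}, by reinterpreting the budget allocation in terms of \emph{relative} shares (share minus cost). First I would isolate the only two structural properties of $\mu^*$ that the proof of Theorem~\ref{thm:fails_myopic} actually invokes: (i) the expected share delivered to the winner exceeds every other participant's expected share by a non-negligible amount, and (ii) the expected shares delivered to non-winners never increase decentralization. The whole convergence-to-minimum-decentralization argument rests on these two facts together with the myopic best-response characterization of Lemma~\ref{lem:equilibria}; since costs enter the model only through the additive term $-\tm{c_i}\cdot\tm{p_i}$, it suffices to show this term can be absorbed without disturbing either property.

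Second, I would make the participation condition explicit and perform the substitution. For a myopic player $i\in\tm{P}$, the condition $\tm{U_i}(\tm{P})-\tm{U_i}(\tm{P}\setminus\{i\})>0$ rewrites as $(\tm{\sigma_i}+\tm{B_i}(\tm{P}))\,v(d(\tm{\sigma}_{\tm{P}})) - \tm{\sigma_i}\,v(d(\tm{\sigma}_{\tm{P}\setminus\{i\}})) > \tm{c_i}$. Defining the relative expected share $\tilde B_i := \tm{B_i}(\tm{P}) - \tm{c_i}/v(d(\tm{\sigma}_{\tm{P}}))$, substituting, and cancelling the $\tm{c_i}$ terms, the inequality becomes exactly the zero-cost participation condition with $\tm{B_i}$ replaced by $\tilde B_i$. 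Thus the cost model is formally identical to the cost-free model driven by relative shares, so Lemma~\ref{lem:equilibria} and Proposition~\ref{prop:noharmforlast2} apply verbatim (equilibria are suffixes, the last two agents are never harmed, and the equilibrium at each round is computable as before).

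Third, I would use the ``$\beta$ large enough'' hypothesis to realize relative shares that preserve the two properties. Because $\tm{\beta}=\beta$ is constant and large, the policy can first earmark to each participating player a baseline reimbursement covering the value-equivalent $\tm{c_i}/v(\cdot)$ of her cost, and then allocate the residual budget exactly as $\mu^*$ does---overwhelmingly to the highest-type winner. With this design each non-winner's relative share $\tilde B_i$ is non-negative but small, hence still decentralization-neutral (property (ii)), while the winner's relative share retains the non-negligible premium (property (i)). The centralizing drift therefore proceeds exactly as in Example~\ref{ex:myopic participation}: the top-type player keeps accumulating relative stake round after round, while the neutral non-winner shares never counteract it, so the same round $t_0$ and overwhelming-probability argument yields minimum decentralization for all $t\ge t_0$.

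The main obstacle I anticipate is the asymmetry in how costs and budget enter the utility: $\tm{B_i}$ is multiplied by the token value $v(d(\tm{\sigma}_{\tm{P}}))$ whereas $\tm{c_i}$ is not, so the ``share minus cost'' bookkeeping is only exact once the reimbursement is scaled by the prevailing token value. I would resolve this by noting that in the failing regime the relevant value factor is the minimum token value $v(d_{\min})$, a fixed positive constant; choosing $\beta$ large enough to guarantee $\tm{B_i}\ge \tm{c_i}/v(d_{\min})$ for every participant (e.g., from large, constant transaction fees, consistent with Assumption~\ref{ass:aligned} and the mature-blockchain reading) makes the reimbursement well-defined and keeps participation individually rational throughout, after which the residual premium reproduces the original dynamics.
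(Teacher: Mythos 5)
Your proposal is correct and follows essentially the same route as the paper, which justifies Observation~\ref{obs:costs} exactly by this reimbursement-and-relative-shares argument: absorb each participant's cost into her expected budget share (feasible since $\tm{\beta}$ is large), and observe that the two properties driving Theorem~\ref{thm:fails_myopic}---the non-negligible winner premium and the decentralization-neutrality of non-winner allocations---then hold for shares net of costs, so the zero-cost analysis applies verbatim. Your explicit substitution $\tilde B_i = \tm{B_i}(\tm{P}) - \tm{c_i}/v(d(\tm{\sigma}_{\tm{P}}))$ and the handling of the value-scaling asymmetry via $v(d_{\min})$ merely make precise what the paper leaves informal.
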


\noindent \textbf{Players with asymmetric lookahead.}
Interestingly, players able to look ahead can fix the shortcoming highlighted in Theorem~\ref{thm:fails_myopic}.
To this aim, we say that given an initial stake profile $\TM{\sigma}{1}$, and a player $i$, the \emph{threshold} $\theta_i$ of $i$ is the largest system value such that any system value below this would be so low to make the player unhappy to win, i.e., $\theta_i = \min_t \{v(d(\tm{\sigma})) \colon \tm{\sigma} \text{ is harmful for $i$}\}$.
The threshold $\theta$ of an instance is $\min_i \theta_i$.
\begin{theorem}\label{thm:asymlookahead}
For every initial stake profile $\TM{\sigma}{1}$ the monetary policy $\mu^*$ eventually keeps status above threshold $\theta$ if players have asymmetric lookahead.
\end{theorem}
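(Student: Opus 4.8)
The plan is to contrast the lookahead dynamics with the myopic failure of Theorem~\ref{thm:fails_myopic} and to show that the recovery mechanism built into asymmetric lookahead prevents the system from being absorbed at minimum decentralization. By Lemma~\ref{lem:equilibria:lookahead}, every stage game admits a unique suffix equilibrium $\tm{P} = P^{\geq i_t}$, so it suffices to track the single ``leading'' participant (the highest-stake member of the current suffix) whose presence drives the index, and hence the value, down. The failure of $\mu^*$ under myopia is precisely that this leader eventually participates in every round and wins with overwhelming probability, pinning the value at its minimum. I would therefore aim to show that, under lookahead, the leader instead \emph{abstains} whenever her participation would make the profile harmful, and that her abstention lets the runner-up accumulate enough budget to restore the value to at least $\theta$ within boundedly many rounds.

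First I would establish the \emph{abstention step}: whenever the leading participant $i$ is in a harmful profile (so her winning pushes $v(d(\tm{\sigma}_{\tm{P}}))$ below her threshold $\theta_i$) and $i$ has a recovery plan, she strictly prefers to abstain. This is where Assumption~\ref{ass:aligned} does the work: since abstaining yields a more decentralized set $Q = \tm{P}\setminus\{i\}$ with strictly higher value, the aligned-value inequality $v(d(\tm{\sigma_{\tm{P}}}))(\tm{\sigma_i}+\tm{b_i}) < v(d(\tm{\sigma_Q}))\,\tm{\sigma_i}$ guarantees that evaluating the frozen stake at the recovered value beats evaluating stake-plus-budget at the depressed value, so the lookahead payoff $\tm{\sigma_i}\cdot\E[v(d(\TM{\sigma}{t+k}_{P_k}))]$ dominates the participation payoff $(\tm{\sigma_i}+\tm{B_i}(\tm{P}))\,v(d(\tm{\sigma}_{\tm{P}}))$. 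This is exactly the gap myopic agents miss, since they evaluate abstention at the current (depressed) value rather than at the recovered value at which they re-enter.

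Next I would construct the \emph{recovery plan} and show it terminates, so that the abstention payoff is not $-\infty$. The idea is that once the leader steps out, $\mu^*$ awards the budget to the next-highest type in the suffix, whose stake catches up to the frozen stake of the leader; after boundedly many such rounds the participating set is balanced enough that the index, and hence the value, rises back to $\theta$, at which point $i$ re-enters. Proposition~\ref{prop:noharmforlast2} anchors this argument: the two lowest-stake players are never harmful and hence always participate, which both prevents the recovery sequence from unravelling into the empty set and guarantees a well-defined terminal round $P_k$ at which the leader rejoins. Combining the abstention step with the existence of recovery plans, the leader can never be trapped into participating forever at minimum value, so the value returns to at least $\theta$ recurrently; a monotonicity argument on the accumulating stake (using Assumption~\ref{assmpt:dm} to control how removing the top participant reshapes the index) then upgrades this to an eventual guarantee, i.e.\ a round $t_0$ beyond which the status stays at or above $\theta$.

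I expect the main obstacle to be precisely the construction and termination of the recovery plan for an \emph{arbitrary} decentralization measure satisfying Assumption~\ref{assmpt:dm} and an arbitrary near-deterministic $\mu^*$, rather than the Nakamoto index of the worked example. One must rule out the degenerate possibility that, while the leader abstains, the runner-up's accumulation keeps the reduced suffix equally centralized---so that the value never recovers and $R_i(P)=\emptyset$---and one must argue that the re-entry/abstention pattern does not oscillate below $\theta$ indefinitely before $t_0$. This is also the reason, noted after Lemma~\ref{lem:equilibria:lookahead}, that the recovery plan---being an equilibrium path of the induced sequence of stage games---need not be polynomial-time computable, even though its mere existence is what the theorem ultimately rests on.
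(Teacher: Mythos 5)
Your core argument is the paper's own: by Lemma~\ref{lem:equilibria:lookahead} each stage game has a unique suffix equilibrium; when the profile becomes harmful for a high-type player she abstains precisely because she has a recovery plan (Assumption~\ref{ass:aligned} supplying the comparison); the recovery plan is itself an equilibrium path of the stage games, so it is actually played out, and its terminal round is by definition one where that player is no longer harmed, i.e., the value is back above her threshold. Two remarks on the supporting steps: the termination of recovery plans that you list as the main obstacle is not something this theorem needs to re-establish --- it is proved inside Lemma~\ref{lem:equilibria:lookahead} (with bounded stakes, after the remaining participants win boundedly many times, the abstainer's participation can no longer decrease decentralization); and recovery is not always driven by a single ``runner-up'' --- under $\mu^*$ the low-stake player can only win when \emph{all} higher-type players abstain simultaneously, so recovery plans generally shrink the participating set to a very short suffix, not just to $\tm{P}\setminus\{i\}$.

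The genuine gap is your final step. You claim to upgrade the recurrence to a round $t_0$ beyond which the status stays at or above $\theta$ forever. No such $t_0$ exists and no monotonicity argument can produce one: Theorem~\ref{thm:asymlookahead2} and Example~\ref{ex:lookahead participation} show that under $\mu^*$ the system hits its \emph{minimum} status recurrently, because every recovery phase must at some point let the lowest-stake player win, and under $\mu^*$ that requires her to be the sole participant, which by Condition~\ref{dm-singleton} of Assumption~\ref{assmpt:dm} forces minimum decentralization (rounds 5, 9, \ldots{} in Table~\ref{tab:lookahead}, repeating indefinitely). The oscillation below $\theta$ that you flag as something ``one must argue does not happen'' is therefore not an obstacle to be ruled out but a permanent feature of the dynamics. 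The statement has to be read as the liveness property the paper actually proves: the system is never \emph{absorbed} below $\theta$ --- whenever the value drops, the implemented recovery plan restores it above $\theta$ --- in contrast with the myopic case of Theorem~\ref{thm:fails_myopic}, where the drop is permanent. Your argument up to and including that recurrence is correct and coincides with the paper's; the stronger statement you target in the last step is false.
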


Unfortunately, lookahead does not avoid the system  sometimes going below $\theta$, and thus the policy is unable to always keep status $\theta$. Actually, we prove an even stronger result, namely that with $\mu^*$ we eventually reach the minimum status.
\begin{theorem}\label{thm:asymlookahead2}
For every initial stake profile $\TM{\sigma}{1}$, there is a $t$ such that, if monetary policy is $\mu^*$, then the system status at time step $t$ is at its minimum value, even if players have asymmetric lookahead.
\end{theorem}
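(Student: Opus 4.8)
The plan is to track the stake of the highest-type player and show that, under $\mu^*$, its relentless accumulation eventually drives the decentralization measure to its minimum, and that neither that player's lookahead nor the other players' recovery plans can prevent this. Fix $h \in \arg\max_i \tau_i$. Since $\mu^*$ is $\mu^\alpha$ with $\alpha\approx 1$---selecting the winner proportionally to type with overwhelming probability and paying the whole budget to the winner---whenever $h\in\tm{P}$ she is the winner with overwhelming probability and $\TM{\sigma_h}{t+1}=\tm{\sigma_h}+\beta$, while if $h\notin\tm{P}$ her stake is unchanged; thus $\tm{\sigma_h}$ is non-decreasing. By Assumption~\ref{assmpt:noinflation} exactly one budget $\beta$ is paid out per round, so the total stake $\sum_i\tm{\sigma_i}=\sum_i\TM{\sigma_i}{1}+(t-1)\beta$ grows without bound.

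I would then split on how often $h$ abstains, using that by the equilibrium definition for asymmetric lookahead an abstaining player must hold a \emph{finite} recovery plan that brings her back. Suppose first that $h$ abstains only finitely often and let $t_0$ be her last abstention. For every $t\ge t_0$ she participates and wins, so her stake grows by $\beta$ each round while every other player's stake is frozen; hence her fractional stake tends to $1$. By Assumption~\ref{assmpt:dm}, once this fraction exceeds $\tau$ the participating set (which contains $h$ and is dominated by her) has minimal decentralization, so the status is minimal at the first such round---she participates there either willingly or because she has no recovery plan (a participant may lack one).

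The remaining case is that $h$ abstains infinitely often. Each abstention is a round of the induced game on $\PL\setminus\{h\}$, whose winner is its highest-type participant---generically the second-highest type $h_2$---so the stake accumulating among the top types of $\PL\setminus\{h\}$ grows without bound, while the total stake $C$ of the players outside $\{h,h_2\}$ stays frozen. Once $\tm{\sigma_{h_2}}$ is large relative to $C$, the set $\PL\setminus\{h\}$ is itself dominated by $h_2$ and hence at minimal decentralization by Assumption~\ref{assmpt:dm}(\ref{dm-singleton})--(\ref{dm-mon-remove}); then at the next round in which $h$ abstains the participating remainder is already at its minimum---exactly the configuration of Round~5 in Example~\ref{ex:myopic participation}, where abstaining leaves $h_2$ dominating the rest---so the status is minimal at that round.

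The main obstacle is that the equilibrium participating set is a \emph{suffix by stake} (Lemma~\ref{lem:equilibria:lookahead}), which may \emph{exclude} $h_2$ once she has grown large, leaving a more decentralized set of smaller players and so postponing the minimum. Closing this gap requires either showing that the budget then simply concentrates in the next-highest type, letting the same argument recurse down the finite chain of types until a nested-dominance configuration (each top player dominating the complement of the players above her) is forced; or a discreteness argument exploiting that $\beta$ is constant (Assumption~\ref{assmpt:noinflation}), so the gap $\tm{\sigma_h}-\tm{\sigma_{h_2}}$ cannot remain pinned within the constant frozen remainder forever and some win must overshoot into single-player dominance. Here Assumption~\ref{ass:aligned} does the crucial work of certifying that any participation which strictly lowers decentralization is harmful, which pins down exactly when $h$ abstains, and the monotonicity clause Assumption~\ref{assmpt:dm}(\ref{dm-mon-remove}) is what lets the suffix structure reduce an arbitrary admissible measure to the top-type dynamics we track.
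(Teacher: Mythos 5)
There is a genuine gap, and you partly name it yourself: your argument never closes the case in which the equilibrium suffix excludes the player whose stake has concentrated, and both of your proposed fixes (recursing down the type chain, or the discreteness/overshoot argument) are left as sketches rather than proofs. Worse, the entire stake-concentration route rests on a property the paper's assumptions do not give you: Assumption~\ref{assmpt:dm} only guarantees that a \emph{singleton} participating set minimizes decentralization (Condition~\ref{dm-singleton}); it does not say that a set in which one player holds more than a $\tau$ fraction of the stake is at the minimum. That implication is specific to the $\tau$-Decentralization Index, whereas Theorem~\ref{thm:asymlookahead2} is stated for arbitrary measures satisfying Assumption~\ref{assmpt:dm} (contrast with Theorem~\ref{thm:asymlookahead3}, which explicitly restricts to the $\tau$-index). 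So both of your cases establish ``minimum status'' only for one particular measure, and your second case is incomplete even there.

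The paper's proof sidesteps stake concentration entirely by combining two facts you never use together: (i) decentralization is evaluated on the \emph{participating} set only (Assumption~\ref{assmpt:onlyP}), and (ii) under $\mu^*$ the winner is chosen (essentially) proportionally to type, so the only way the low-stake player can win with large probability --- which, as in the proof of Theorem~\ref{thm:fails_myopic}, eventually becomes the only way to raise decentralization --- is for her to be essentially the \emph{only} participant. With asymmetric lookahead such a round actually occurs: it is precisely the first step of the recovery plans of all the abstaining higher players, so they do abstain and leave her alone. At that round the participation set is a singleton, hence by Condition~\ref{dm-singleton} of Assumption~\ref{assmpt:dm} the status is at its minimum; no asymptotic analysis of stake fractions, and no resolution of the suffix obstacle, is needed. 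This is exactly Round~5 of Example~\ref{ex:lookahead participation} (only Player~3 participates, index~1), which is the relevant illustration here --- not the full-participation Round~5 of Example~\ref{ex:myopic participation} that you cite, where the minimum arises from concentration rather than from a singleton participating set.
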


As above we provide an example that highlights this behavior.
\begin{example}[Lookahead Participation with Nakamoto Index]\label{ex:lookahead participation}
Consider exactly the same setting as in Example~\ref{ex:myopic participation}, except that now players have asymmetric lookahead. It is not hard to check that for the first four rounds these players will take the same decisions as in the case of myopic players. Anyway, for round 5 Players~1 and 2 can plan to let Player~3 (at the cost to have the Nakamoto index reach its minimum value) win so that at next step the Nakamoto index would increase again, guaranteeing a larger utility to both of them. 
The dynamics is summarized in Table~\ref{tab:lookahead}.
\begin{table}[t]
    \centering
    \rowcolors{2}{gray!10}{white} 
    \resizebox{0.7\textwidth}{!}{
\begin{tabular}{@{}c c c c c c@{}}
\toprule
\textbf{Round} & \textbf{Initial Stake} & \textbf{Participants} & \textbf{Nakamoto Index} & \textbf{Winner} & \textbf{Final Stake} \\
\midrule
1 & (1, 1, 1) & All three & 2 & Player 1 & (2, 1, 1) \\
2 & (2, 1, 1) & All three & 2 & Player 1 & (3, 1, 1) \\
3 & (3, 1, 1) & P2, P3     & 2 & Player 2 & (3, 2, 1) \\
4 & (3, 2, 1) & All three & 2 & Player 1 & (4, 2, 1) \\
5 & (4, 2, 1) & P3 & 1 & Player 3 & (4, 2, 2) \\
6 & (4, 2, 2) & All three & 2 & Player 1 & (5, 2, 2) \\
7 & (5, 2, 2) & P2, P3 & 2 & Player 2 & (5, 3, 2) \\
8 & (5, 3, 2) & All three & 2 & Player 1 & (6, 3, 2) \\
9 & (6, 3, 2) & P3 & 1 & Player 3 & (6, 3, 3) \\
10 & (6, 3, 3) & All three & 2 & Player 1 & (7, 3, 3) \\
\bottomrule
\end{tabular}}
    \caption{Evolution of stakes, participation, Nakamoto index, and reward allocation with asymmetric lookahead}
    \label{tab:lookahead}
\end{table}
\end{example}

\subsection{Simulating Players with Asymmetric Lookahead}
We next ask if we can achieve similar performances with other policies with myopic players and avoid that the system value is sometimes at the minimum value. To this aim, let us define the following policy $\mu^\ell$, that works as follows: for each round $t$, choose as the winner who would have been chosen by $\mu^*$ at round $t+1$ if players have asymmetric lookahead. For example, in the setting of Example~\ref{ex:lookahead participation}, this policy would work as described in Table~\ref{tab:simulation}.

\begin{table}[t]
    \centering
    \rowcolors{2}{gray!10}{white} 
    \resizebox{0.7\textwidth}{!}{
\begin{tabular}{@{}c c c c c c@{}}
\toprule
\textbf{Round} & \textbf{Initial Stake} & \textbf{Participants} & \textbf{Nakamoto Index} & \textbf{Winner} & \textbf{Final Stake} \\
\midrule
1 & (1, 1, 1) & All three & 2 & Player 1 & (2, 1, 1) \\
2 & (2, 1, 1) & All three & 2 & Player 2 & (2, 2, 1) \\
3 & (2, 2, 1) & All three & 2 & Player 1 & (3, 2, 1) \\
4 & (3, 2, 1) & All three & 2 & Player 3 & (3, 2, 2) \\
5 & (3, 2, 2) & All three & 2 & Player 1 & (4, 2, 2) \\
6 & (4, 2, 2) & All three & 2 & Player 2 & (4, 3, 2) \\
7 & (4, 3, 2) & All three & 2 & Player 1 & (5, 3, 2) \\
8 & (5, 3, 2) & All three & 2 & Player 3 & (5, 3, 3) \\
9 & (5, 3, 3) & All three & 2 & Player 1 & (6, 3, 3) \\
10 & (6, 3, 3) & All three & 2 & Player 2 & (6, 4, 3) \\
\bottomrule
\end{tabular}}
    \caption{Evolution of stakes, participation, Nakamoto index, and reward allocation}
    \label{tab:simulation}
\end{table}

In this example, it happens that at all steps all three players are participating and the system value never decreases. Interestingly, we prove that this is not a case. Indeed, this monetary policy allows us to achieve all the desired goals when decentralization is measured via a $\tau$-Decentralization Index.
\begin{theorem}\label{thm:asymlookahead3}
  If decentralization is measured via a $\tau$-Decentralization Index, then for every initial stake profile $\TM{\sigma}{1}$ that is not harmnful for the agent with largest type and with status at least $\theta$, $\mu^\ell$ always keeps status at least $\theta$ even if players are myopic. Moreover, it is not dominated for every player to participate at each round, and the performance of this policy tends to the performances of $\mu^*$ when run against players with asymmetric lookahead as the time horizon goes to infinity.
\end{theorem}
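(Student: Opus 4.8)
The plan is to prove all three assertions---(i) $\mu^\ell$ keeps the status at least $\theta$, (ii) full participation is undominated, and (iii) the asymptotic performance equivalence with $\mu^*$---through a single induction that couples the $\mu^\ell$ dynamics against myopic players to the one-step-ahead lookahead simulation that \emph{defines} $\mu^\ell$. First I would make the rule explicit: at round $t$ with current profile $\sigma^{(t)}$, one simulates $\mu^*$ against asymmetric-lookahead players starting from $\sigma^{(t)}$ (well-defined, and with a unique equilibrium path, by Lemma~\ref{lem:equilibria:lookahead}) and awards the token to the winner of the \emph{second} simulated round. Maintaining inductively that $\sigma^{(t)}$ is not harmful for the top-type player, the first simulated round is an ordinary $\mu^*$ round in which that player wins, so the simulated profile after one step is $\sigma^{(t)}+e_1$ (writing $e_i$ for the award of one token to player $i$). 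The rule then collapses to a clean dichotomy: if $\sigma^{(t)}+e_1$ is \emph{not} harmful for the top player, the simulation's second round is again a top-player win and $\mu^\ell$ sets $\sigma^{(t+1)}=\sigma^{(t)}+e_1$ (Case~B); otherwise the simulation enters a recovery in which a low-stake player $j$ wins, and $\mu^\ell$ instead sets $\sigma^{(t+1)}=\sigma^{(t)}+e_j$---crucially placing the rebalancing token at the \emph{more balanced} profile $\sigma^{(t)}$, and without shrinking the participation set (Case~A).

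The invariant I carry is that full participation is the unique stage equilibrium at $\sigma^{(t)}$ (equivalently, $\sigma^{(t)}$ is not harmful for the top-type player), which by the defining property of $\theta$ yields status $v(d(\sigma^{(t)}))\ge\theta$. The base case is exactly the hypothesis on $\sigma^{(1)}$. In Case~A the token goes to a low player, so the $\tau$-index monotonicity built into Assumption~\ref{assmpt:dm}---adding stake to a player outside the pivotal top set only weakly raises the index---gives $d(\sigma^{(t)}+e_j)\ge d(\sigma^{(t)})$; since the family of profiles for which full participation is the suffix equilibrium of Lemma~\ref{lem:equilibria} is upward-closed in decentralization, the invariant is preserved. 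In Case~B we award to the top player \emph{precisely because} $\sigma^{(t)}+e_1$ is not harmful for her, so the next profile is non-harmful by construction and the invariant again closes. Because the invariant forces full participation as the unique stage equilibrium, participation is undominated for every player at every round, giving (ii), and status $\ge\theta$ throughout gives (i). Here I would also use Assumption~\ref{assmpt:dm}(\ref{dm-mon-remove}) to argue that the top-type player is the binding one, so non-harmfulness for her propagates down the suffix to all players.

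For the performance claim (iii) I would compare the winner sequences of $\mu^\ell$ against myopic players and of $\mu^*$ against lookahead players. By the dichotomy above, $\mu^\ell$'s winner at round $t$ is exactly the winner of the round that $\mu^*$-with-lookahead reaches one step later, so the two sequences agree up to a one-round shift and a bounded transient. Although the per-round profiles genuinely diverge (e.g.\ $(6,4,3)$ under $\mu^\ell$ versus $(7,3,3)$ under $\mu^*$ in Examples~\ref{ex:myopic participation}--\ref{ex:lookahead participation}), both processes are pinned to the same boundary $\tau$-index, which fixes the long-run rate at which the budget must be diverted from the top player to lower players to hold decentralization at that boundary. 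Hence the cumulative winner counts differ by only $O(1)$, the empirical selection frequencies of every type coincide in the limit, and the average throughput (and value) of $\mu^\ell$ converges to that of $\mu^*$ as $T\to\infty$, consistently with Theorem~\ref{thm:asymlookahead}.

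The hard part will be the Case~B link: turning ``$\mu^\ell$ chose the top player, hence $\sigma^{(t+1)}$ is non-harmful for her'' into the absolute bound status $\ge\theta$, rather than merely a relational statement. This requires reading $\theta$ through the harmfulness condition---using Assumption~\ref{ass:aligned} to convert the value-versus-budget comparison in the definition of \emph{harmful} into a pure comparison of $\tau$-indices, so that non-harmfulness for the top player is equivalent to her removal not raising the index, and then calibrating $\theta$ so that this equivalence delivers status $\ge\theta$. The second delicate point, needed for (iii), is to justify that the divergent $\mu^\ell$ and $\mu^*$ profiles nonetheless share the same asymptotic award rates; I would establish this by showing both are driven to the same extremal $\tau$-index, so that the balance equation determining how often the top player may win is identical and the $O(1)$ discrepancy in cumulative counts is asymptotically negligible.
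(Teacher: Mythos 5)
Your skeleton (the top-player-win vs.\ recovery-win dichotomy, and the one-round-shift argument for the convergence claim) is close to the paper's, but the step that carries the entire first claim is wrong, and you flag it yourself as ``the hard part'' without a valid fix. You assert that your invariant---$\sigma^{(t)}$ not harmful for the top-type player---``by the defining property of $\theta$ yields status $v(d(\sigma^{(t)}))\ge\theta$.'' The definition goes the other way: $\theta_i=\min\{v(d(\sigma))\colon \sigma \text{ harmful for } i\}$, so \emph{harmful} profiles have value at least $\theta_i$, while non-harmfulness puts no lower bound whatsoever on the value (the fully centralized profiles reached in Theorem~\ref{thm:fails_myopic} are non-harmful for every player, everyone participates, yet the status is minimal). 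Hence non-harmfulness can never deliver the absolute bound, and ``calibrating $\theta$'' is not an option: $\theta$ is fixed by the instance. The paper's argument is relational, not absolute: it proves by induction that the status \emph{never decreases} along the real trajectory, so $v(d(\TM{\sigma}{t}))\ge v(d(\TM{\sigma}{1}))\ge\theta$, where the last inequality is precisely the hypothesis ``initial status at least $\theta$''---a hypothesis your argument never actually uses.

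The monotonicity itself is also where you are missing the key lemma. In Case~A you claim that awarding the token to a low-stake player weakly raises the $\tau$-index as a structural property (``monotonicity built into Assumption~\ref{assmpt:dm}''), but Assumption~\ref{assmpt:dm} contains no such condition, and the claim is false for the $\tau$-index in general: from $(3,2,1)$ with $\tau=1/2$, awarding $\beta=10$ to the last player gives $(3,2,11)$ and the Nakamoto index drops from $2$ to $1$. In Case~B you offer no monotonicity argument at all. The paper obtains monotonicity in both cases \emph{strategically}, not structurally: Lemma~\ref{lem:good_recovery} shows that a recovery-plan winner whose win would lower the index would, by Assumption~\ref{ass:aligned}, suffer a loss and would therefore abstain (she has a recovery plan), contradicting her selection as winner; the same alignment argument covers rounds in which the top player wins, since in the simulation she wins only when participation is not harmful for her. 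Without this lemma (or an equivalent), neither case of your induction closes; your auxiliary claim that the full-participation profiles are ``upward-closed in decentralization'' is likewise unsupported and becomes unnecessary once the strategic monotonicity is in place. Your treatment of claims (ii) and (iii) is fine in substance, though for (ii) the paper's argument is simpler and unconditional: under $\mu^\ell$ the winner does not depend on who actually participates, so participating is never dominated---no invariant needed.
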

As discussed above for the policy $\mu^*$, even for $\mu^\ell$ the result can be extended to non-zero costs, as long as the budget is large enough to allow that shares assigned to players will be above their costs (see Observation~\ref{obs:costs} for details).

\subsection{The All-Pay Policy and Sybil-Proofness}
The policy discussed above assumes that the winner receives all the budget (or a share of the budget of the round that is non-negligibly larger than the share received by other players). An alternative direction would be to consider the policy $\mu^{all}$ that always chooses as the winner the participant with largest type, but shares the budget (only) among \emph{participants}  with zero or negligible difference among these shares (note that non-participating players receive no rewards).

It is immediate to check that if all players participate, decentralization never decreases. Moreover, if player $i$ participates, for each player {$\tau_j > \tau_i$} it is convenient to participate, since winner selection does not depend on their participation, but participation would guarantee them a reward and non-smaller system value. Finally, given that upon her participation, all remaining players will participate, also the participant with largest type always has an incentive to participate, since she would receive a reward without affecting the system value. Hence we can conclude that this policy also enjoys the desired properties, as formally stated in the following theorem.
\begin{theorem}
Assume we use policy $\mu^{all}$. Then, in every instance of the game, it is not dominated for every player to participate at each round. Hence, the policy always chooses the player with largest type as a winner. Moreover, the system value never decreases.
\end{theorem}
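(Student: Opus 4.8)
The plan is to verify the three claims of the theorem in the natural order, each reducing to a short best-response argument under the stated policy $\mu^{all}$. The key observation to establish first is the behavioral structure: under $\mu^{all}$ the winner is always the participating agent with the largest type, so a player's participation decision never affects \emph{who wins}—it only affects whether she herself collects a budget share and how the decentralization measure evaluates the participating set. I would therefore begin by fixing an arbitrary round $t$ and stake profile $\tm{\sigma}$, and computing the utility difference $\tm{U_i}(\tm{P}) - \tm{U_i}(\tm{P}\setminus\{i\})$ for a myopic player $i$ when all other players participate.

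First I would argue that participation weakly dominates abstention for every player. When $i$ joins $\tm{P}$, two things change relative to $\tm{P}\setminus\{i\}$: she receives her budget share $\tm{B_i}(\tm{P}) \geq 0$, and the participating set grows by one element. The crux is to show that adding $i$ to the participating set cannot \emph{decrease} the token value. Since $i$ has the smallest stake among the newly-relevant players (by the w.l.o.g.\ ordering and the structure of suffix equilibria established in Lemma~\ref{lem:equilibria}), I would invoke Assumption~\ref{assmpt:dm}, in particular condition~\ref{dm-mon-remove}, to conclude that removing a non-maximal-stake player does not increase decentralization; equivalently, $d(\tm{\sigma}_{\tm{P}}) \geq d(\tm{\sigma}_{\tm{P}\setminus\{i\}})$, and hence $v(d(\tm{\sigma}_{\tm{P}})) \geq v(d(\tm{\sigma}_{\tm{P}\setminus\{i\}}))$ by monotonicity of $v$ (Assumption~\ref{assmpt:onlyP}). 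Combining the non-negative budget share with the non-decreasing value term gives $\tm{U_i}(\tm{P}) \geq \tm{U_i}(\tm{P}\setminus\{i\})$, so participation is never strictly worse—i.e., it is not dominated to participate. The second and third claims then follow almost immediately: if every player participates, then by definition of $\mu^{all}$ the winner is the largest-type participant among all $n$ players; and since the full set $\PL$ is always participating at this (undominated) profile, the participating set never shrinks, so by condition~\ref{dm-mon-remove} again the decentralization measure—and thus the token value—never decreases over time.

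The main obstacle I anticipate is handling the value-versus-budget comparison carefully at the boundary, and making precise the sense in which ``not dominated'' is claimed rather than ``strictly dominant.'' The informal argument in the text sketches strict incentives, but because budget shares may be negligible and the value term may stay flat, the honest statement is weak dominance; I would need to make sure the suffix-equilibrium characterization (Lemma~\ref{lem:equilibria}) is correctly applied so that $i$ is genuinely a smallest-stake entrant relative to the set to which condition~\ref{dm-mon-remove} is applied—otherwise the monotonicity-under-removal condition does not directly bite. Subtlety also arises from Assumption~\ref{ass:aligned}: it guarantees value increments dominate budget increments, which I would use to rule out any perverse case where a player might prefer a smaller participating set; but since here participation only ever (weakly) \emph{raises} value while also granting a budget share, the two effects align and the assumption is not even needed for the weak-dominance direction, only as a sanity check that no agent prefers to shrink the set.

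I would close by assembling these pieces into the formal statement: for every instance, full participation is an undominated profile at every round (claim one); $\mu^{all}$ consequently selects the global largest-type player as winner each round (claim two); and because the participating set is always $\PL$ under this profile, $d(\tm{\sigma}_{\PL}^{(t)})$ is non-decreasing in $t$ by repeated application of Assumption~\ref{assmpt:dm}, yielding a non-decreasing token value (claim three). No polynomial-time or recovery-plan machinery is required here, which makes this the cleanest of the policy results and a natural closing contrast to the probabilistic policies analyzed earlier.
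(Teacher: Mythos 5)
Your central step is false, and it is refuted by the paper's own running example. You claim that joining a participating set can never lower the token value, i.e., $d(\tm{\sigma}_{\tm{P}}) \geq d(\tm{\sigma}_{\tm{P}\setminus\{i\}})$ for every $i$, and from this you conclude that participation \emph{weakly dominates} abstention. Take round 3 of Example~\ref{ex:myopic participation}: stakes $(3,1,1)$, Nakamoto index, players 2 and 3 participating (this is even a suffix profile, so your restriction to the suffix structure of Lemma~\ref{lem:equilibria} does not save you). If player 1 joins, the index drops from $2$ to $1$; with the example's unit budget and identity value, participation under $\mu^{all}$ gives her $(3+\tfrac{1}{3})\cdot 1 = \tfrac{10}{3}$, while abstaining gives $3\cdot 2 = 6$. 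So abstention is strictly better against this profile, and weak dominance fails; indeed Assumption~\ref{ass:aligned}---which you dismiss as ``not even needed, only a sanity check''---is precisely the assumption guaranteeing that such a value loss outweighs any budget share. This is why the theorem claims only that participation is \emph{not dominated}, a much weaker property: it suffices to exhibit \emph{one} opponents' profile where participation is strictly better (e.g., a single other participant: a singleton already minimizes $d$ by Condition~\ref{dm-singleton} of Assumption~\ref{assmpt:dm}, so joining cannot lower the value and the reward share is pure gain). Your derivation of the false inequality rests on two misreadings: Condition~\ref{dm-mon-remove} of Assumption~\ref{assmpt:dm} is a \emph{conditional} statement (if removing the maximum-stake player does not increase $d$, then removing any other player does not); it is not an unconditional ``removal never increases decentralization'' axiom, and it says nothing about adding players. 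And your premise that the marginal entrant ``has the smallest stake'' is backwards: in the suffix set $P^{\geq i}$ the entrant $i$ has the \emph{largest} stake among participants---exactly the harmful case that drives Theorem~\ref{thm:fails_myopic}.

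Your justification of the third claim is also wrong, for a different reason. Under full participation the participating set never changes from round to round, so Condition~\ref{dm-mon-remove} (a statement about removing players from a set) is vacuous for proving monotonicity in $t$; what changes over time is the \emph{stake vector}. The paper's (terse) argument is that $\mu^{all}$ adds the same amount $\beta/n$ to every stake each round, which flattens relative stakes: for the $\tau$-Decentralization Index, the top-$k$ sum satisfies $S_k \geq kT/n$, hence $(S_k + k\beta/n)/(T+\beta) \leq S_k/T$, so every top-$k$ fraction weakly shrinks and the index weakly increases. In summary, the paper's proof (the paragraph preceding the theorem) combines (i) winner-irrelevance of rewards under the all-pay rule plus favorable opponent profiles to get \emph{non-domination}, and (ii) the equal-split flattening argument to get value monotonicity along the full-participation path; your proposal replaces (i) with a dominance claim that the model's assumptions refute, and (ii) with an inapplicable set-removal argument.
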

Still, this policy has one severe drawback: it trivially incentivizes players to create sybils in order to increase the received reward. We next prove that some specific implementations of the policy $\mu^\ell$ are instead much less prone to incentivize this kind of behavior, especially in settings with large competition.

To this aim, let us formalize the concept of sybils. Given a player $i$ with stake $\sigma_i$ and type $\tau_i$, we say that $i$ can split herself in sybils $(\xi_1, \ldots, \xi_{\ell})$ where each sybil $\xi_j$ has a stake $\sigma_{\xi_j}$ and a type $\tau_{\xi_j}$ such that $\sum_{j = 1}^\ell \sigma_{\xi_j} \leq \sigma_i$ and $\sum_{j = 1}^\ell \tau_{\xi_j} \leq \tau_i$. We say that a policy is \emph{sybil-proof} if for each stake profile $\tm{\sigma}$, for each player $i$, and every possible partition in sybils $(\xi_1, \ldots, \xi_\ell)$, the expected utility that $i$ receives from this policy when the current stake profile is $\tm{\sigma}$ and current type profile is $\tau$, is at least as large as the sum of the expected utility that the sybils receive from the policy if the current stake profile is $(\tm{\sigma}_{-i}, \sigma_{\xi_1}, \ldots, \sigma_{\xi_\ell})$ and the current type profile is $(\tau_{-i}, \tau_{\xi_1}, \ldots, \tau_{\xi_\ell})$.

It is not hard to see that $\mu^\ell$ is not sybil-proof. Suppose indeed that $\tm{\sigma}$ is harmful for player 1. We say that this player has \emph{recovery sybils} if there are $(\xi_1, \ldots, \xi_{\ell})$ such that $(\tm{\sigma}_{-i}, \sigma_{\xi_1}, \ldots, \sigma_{\xi_\ell})$ is not harmful for the one sybil with larger type. Note that recovery sybils always exist (simply evenly split in $\tau_i$ sybils with {minimum} type and equal stake). The \emph{preferred} recovery sybils $(\xi^*_1, \ldots, \xi^*_{\ell})$ are the ones that maximize the maximum type among sybils. Then if the maximum type among $(\xi^*_1, \ldots, \xi^*_{\ell})$ is larger than (or it is the same as, but the associated stake is larger than) the one of the player 2, then we conclude that player 1 will prefer to create sybils. Indeed, without sybils either this player does not participate, and so she does not receive any reward, or, if she participates, then the loss in the system value will negatively outweigh the (expected) reward. On the other hand, if sybils participate in place of player 1, then they will receive an expected reward that is at least as large as the expected reward would have received by player 1 on participation, but without affecting the system value.

On the contrary, we have the following theorem for $\mu^{k}$.
\begin{theorem}\label{thm:sybil}
 Policy $\mu^\ell$ is sybil-proof if for every player $i$ and every $\sigma$ that is harmful for $i$ it holds that the corresponding preferred recovery sybils are such that the largest type is less than (or it is the same as, but the associated stake is smaller than) the one of player $i+1$.
\end{theorem}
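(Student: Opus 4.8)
The plan is to prove the implication directly: assuming the stated hypothesis on preferred recovery sybils, I would show that for every player $i$, every stake profile $\sigma$, and every admissible split $(\xi_1,\ldots,\xi_\ell)$, the utility $i$ obtains under $\mu^\ell$ is at least the sum of the utilities its sybils obtain. The argument mirrors the informal non-sybil-proofness discussion that precedes the statement, but run in the favourable direction. Two structural facts will be used throughout. First, under $\mu^*$ (and hence under its one-step simulation $\mu^\ell$) the budget is awarded essentially to the highest-type participant, so a participant whose type is strictly dominated collects only a negligible expected share; the tie-breaking is by stake. Second, the split constraints $\sum_k \sigma_{\xi_k}\le\sigma_i$ and $\sum_k \tau_{\xi_k}\le\tau_i$ bound both the total stake and the total ``winning power'' of the sybils by those of $i$. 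The equilibrium characterizations (Lemmas~\ref{lem:equilibria} and~\ref{lem:equilibria:lookahead}) guarantee that, in both the original and the split game, the participating set is a suffix, which is what lets me speak of a well-defined utility ``from the policy''.

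First I would dispose of the case where $\sigma$ is not harmful for $i$. Here $i$ participates in the original game, and by Assumption~\ref{ass:aligned} non-harmfulness yields $d(\sigma_P)\ge d(\sigma_{P\setminus\{i\}})$, i.e.\ $i$ is not a centralizing participant. The reward side is immediate: since the sybils' total type is at most $\tau_i$, the highest-type winner after the split is unchanged or weaker, so the expected budget captured by the sybils does not exceed what $i$ captures. It then remains to control the stake-times-value term, which I defer to the obstacle discussed below.

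The core case is $\sigma$ harmful for $i$. Without sybils, $i$'s best response is to abstain, so $i$ collects $\sigma_i\cdot v(d(\sigma_{P\setminus\{i\}}))$ with no reward. I would split a generic admissible partition into two subcases. If it is \emph{not} a recovery split, it is harmful for its highest-type sybil, so participation depresses the value below the recovered level; then the sybil total is at most $\sum_k \sigma_{\xi_k}\,v(\cdot)\le \sigma_i\,v(d(\sigma_{P\setminus\{i\}}))$, matching $i$'s utility. If it \emph{is} a recovery split, the value is held at the recovered level, and the hypothesis applies: since the largest type among the preferred recovery sybils—and hence, by maximality, among \emph{any} recovery sybils—is strictly below (or ties but undercuts in stake) the type of player $i+1$, the highest-type participant after the split is a genuine player ($i+1$ or a higher-type incumbent), never a sybil. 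Consequently no sybil wins, the sybils collect no budget, and their total utility is at most $\sum_k \sigma_{\xi_k}\,v(\cdot)\le \sigma_i\,v(d(\sigma_{P\setminus\{i\}}))$. Quantifying ``highest type wins with overwhelming probability'' so that the residual winning chance of a dominated sybil yields only a negligible reward, absorbed by Assumption~\ref{ass:aligned}, closes this subcase; the tie clause of the hypothesis is exactly what handles the borderline case in which a sybil and player $i+1$ are type-indistinguishable, the stake tie-break then favouring $i+1$.

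The step I expect to be the main obstacle is controlling the decentralization/value term uniformly over \emph{all} splits rather than only the preferred recovery one. Concretely, I must rule out that redistributing $i$'s stake among many smaller shares pushes $v$ strictly above the recovered value $v(d(\sigma_{P\setminus\{i\}}))$, which could let the term $\sum_k \sigma_{\xi_k}\,v(\cdot)$ overtake $\sigma_i\,v(d(\sigma_{P\setminus\{i\}}))$ despite $\sum_k \sigma_{\xi_k}\le\sigma_i$. I would handle this by invoking the defining property of a recovery split—that it is engineered to restore the value to the abstention/threshold level and not beyond—together with the two conditions of Assumption~\ref{assmpt:dm} and Proposition~\ref{prop:Nakamoto}, which constrain how the $\tau$-Decentralization Index reacts to breaking one participant into smaller shares alongside the surviving suffix. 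Establishing that a split cannot simultaneously be a recovery split and inflate the perceived decentralization past the removal level is the crux, and it is also what reduces the whole comparison to the preferred recovery sybils as the designer's worst case, thereby making the type hypothesis both necessary (as the preceding non-sybil-proofness remark shows) and sufficient.
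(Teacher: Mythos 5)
Your core argument is the same as the paper's. The paper's proof runs exactly the three-way case split you propose for the harmful case: (i) the preferred recovery sybils get no budget because, by the hypothesis, none of them has the largest type among participants; (ii) any other recovery split can only have a smaller maximum type (by maximality of the preferred split), so again no sybil wins; (iii) a non-recovery split is harmful for its largest-type sybil, so participation yields a negative net gain. The conclusion is that sybils can never extract a positive reward, while $i$ herself gets zero reward by abstaining. Up to this point your proposal is a faithful (indeed more explicit) reconstruction of the paper's proof.

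The divergence is the step you yourself single out as the crux: bounding the stake-times-value term $\sum_k \sigma_{\xi_k}\, v(d(\cdot))$ by $\sigma_i\, v(d(\sigma_{P\setminus\{i\}}))$. Your instinct that this needs an argument is correct, but your proposed repair does not exist in the paper's framework: a recovery split is defined \emph{only} by non-harmfulness for the largest-type sybil; there is no property stating that it restores the value ``to the abstention level and not beyond,'' and neither Assumption~\ref{assmpt:dm} nor Proposition~\ref{prop:Nakamoto} gives any \emph{upper} bound on how the index reacts to adding participants. In fact the opposite can happen: splitting one large stake into several participating pieces can strictly raise a $\tau$-Decentralization Index (for the Nakamoto index, $(3,1)$ has index $1$, while $(3,2,2,1)$ has index $2$), so the value after a recovery split can strictly exceed the abstention value, and the inequality you need can fail under the literal utility of Definition~\ref{def:myopic}. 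The paper does not close this step either; its proof is phrased in terms of net reward (``utility $0$'' for non-winning sybils and for abstention, ``negative'' for harmful participation), which implicitly holds the token-value term fixed across the scenarios being compared. So: your approach matches the paper's, and the cases you do close are the ones the paper closes; but the step you flag as the obstacle cannot be resolved by the property you invoke, and is resolvable only by adopting the paper's implicit reading of ``utility received from the policy'' as the expected budget collected, under which your remaining case analysis already suffices.
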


Note that  for the policy $\mu^\ell$ to be sybil-proof in the setting of Example~\ref{ex:myopic participation}, it is sufficient that the players have a difference in type value less than $1$. Indeed, to make the sybil with largest type not harmful, it must be the case that we need to reduce her stake by at least $1$. If the difference between the types of the players is at most $1$, then the first player has no incentive to create sybils.

\section{On Policies Balancing Types and Stakes}
While in the previous section, we focused on policies $\mu^\alpha$, with $\alpha$ that was close to $1$, i.e., policies favoring performance by choosing the participant with the largest type with overwhelming probability, we here discuss about the remaining values of $\alpha$. Unfortunately, we show that for these values of $\alpha$, this policy may still lead to problematic outcomes: specifically, we show that for values of $\alpha$ that are far away from $1$, policy $\mu^\alpha$ may fail to adjust unbalanced initial stakes, even if we disregard strategic participation and assume full participation at each round.

To this aim, recall that $\mu^\alpha$ is defined as follows: at each round $t$ choose participant $i$ as a winner with probability $\tm{w_i} = {\tm{p_i}}/{\tm{W}},$ where $\tm{p_i} = \alpha \tau_i + (1-\alpha) \tm{\sigma_i},$ 
with $\tm{W} = \sum_{j = 1}^n \tm{p_j}$.  {This defines our novel concept of \emph{virtual stake}, combining type and stake to determine the winner.} Moreover, $\mu^\alpha$ assigns a reward $\tm{\beta} = \beta$ (w.lo.g., we henceforth suppose $\beta=1$) only to the winner. 
We next prove some properties of this policy. To this aim, let us first introduce some useful notation: $T:= \sum_{j=1}^n \tau_j$ and $\tm{S}:=\sum_{j=1}^n \tm{\sigma_j}$. Observe that $\tm{W} = \alpha T + (1-\alpha) \tm{S}$. We have the following result.

\begin{proposition}\label{prop:pinvariant}
	For every $i$ and every $t\ge 1$ we have that $\TM{w_i}{t+1}=\tm{w_i}.$
	Consequently,
	$$ \tm{w_i}=\TM{w_i}{1}\quad\forall t,
	\qquad\text{where}\qquad
	\TM{w_i}{1}
	=\frac{\alpha \tau_i + (1-\alpha) \TM{\sigma_i}{1}}
	{\alpha T + (1-\alpha) \TM{S}{1}}.$$ 
\end{proposition}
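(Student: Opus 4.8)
The plan is to compute directly how the winning probability $\tm{w_i}$ changes from round $t$ to round $t+1$ and to show the change is identically zero, after which the closed form follows by a one-line induction. Throughout I would work under the stated full-participation assumption, so $\tm{P}=\PL$ at every round, and track expected stakes via the update $\TM{\sigma_i}{t+1}=\tm{\sigma_i}+\tm{B_i}(\PL)$, where $\tm{B_i}(\PL)=\tm{w_i}\,\beta=\tm{w_i}$ is the expected token allocation to $i$ (only the winner is rewarded, $i$ wins with probability $\tm{w_i}$, and $\beta=1$).

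The first key step is to observe that the denominator $\tm{W}=\alpha T+(1-\alpha)\tm{S}$ evolves deterministically. Since types are fixed, $T$ is constant; and since exactly one unit of budget is handed out per round, the total stake satisfies $\TM{S}{t+1}=\tm{S}+\sum_i \tm{w_i}=\tm{S}+1$ using $\sum_i \tm{w_i}=1$. Hence $\TM{W}{t+1}=\tm{W}+(1-\alpha)$, independently of who wins. This deterministic growth of the denominator is what drives the whole argument.

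The second step is to track the numerator $\tm{p_i}=\alpha\tau_i+(1-\alpha)\tm{\sigma_i}$. Substituting the stake update gives $\TM{p_i}{t+1}=\alpha\tau_i+(1-\alpha)(\tm{\sigma_i}+\tm{w_i})=\tm{p_i}+(1-\alpha)\tm{w_i}$. I would then write $\tm{p_i}=\tm{w_i}\,\tm{W}$ (the definition $\tm{w_i}=\tm{p_i}/\tm{W}$ rearranged) and combine with the denominator update:
\[
\TM{w_i}{t+1}=\frac{\TM{p_i}{t+1}}{\TM{W}{t+1}}=\frac{\tm{w_i}\,\tm{W}+(1-\alpha)\tm{w_i}}{\tm{W}+(1-\alpha)}=\tm{w_i}\cdot\frac{\tm{W}+(1-\alpha)}{\tm{W}+(1-\alpha)}=\tm{w_i}.
\]
The cancellation is the crux: the factor $(1-\alpha)$ scaling the numerator increment is exactly the additive increment of the denominator, so $\tm{w_i}$ factors out. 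Since this holds for every $t\ge 1$, induction yields $\tm{w_i}=\TM{w_i}{1}$ for all $t$, and evaluating the definition at $t=1$ gives the stated closed form.

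The step I expect to require the most care is not the algebra but pinning down the modeling convention that makes the statement literally true: one must work with expected stakes (equivalently, read $\tm{b_i}$ as the expected share $\tm{w_i}\beta$), since along a raw stochastic realization the winner's stake jumps and $\tm{w_i}$ is not pathwise constant. Once this is fixed, the fact that the total stake $\tm{S}$ grows by exactly one each round regardless of the winner is the only structural input, and the remainder is the short computation above.
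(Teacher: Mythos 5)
Your proposal is correct and follows essentially the same route as the paper's proof: both establish $\TM{S}{t+1}=\tm{S}+1$ and hence $\TM{W}{t+1}=\tm{W}+(1-\alpha)$, derive the numerator update $\TM{p_i}{t+1}=\tm{p_i}+(1-\alpha)\tm{w_i}$ in expectation, and conclude invariance by the same cancellation followed by induction. Your explicit remark that the statement must be read in terms of expected stakes (since pathwise the winner's stake jumps) is a point the paper leaves implicit, but it does not change the argument.
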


Proposition~\ref{prop:pinvariant} essentially states that round~$1$ already fixes the proportions of \emph{all} future weights: each subsequent round merely rescales every $\tm{w_i}$ by the common
factor $(1+(1-\alpha)/\tm{W})$.

Unfortunately, we prove that $\mu^\alpha$ has undesirable properties if $\alpha$ is far away from $1$. Specifically, we prove the following result.
\begin{theorem}\label{thm:virtual}
For each $\alpha \leq 1-\delta$ for some non-negligible $\delta$, there is an initial stake profile $\TM{\sigma}{1}$, such that the player with largest type is unable to collect under policy $\mu^\alpha$ more than a negligible fraction of the total stake.
\end{theorem}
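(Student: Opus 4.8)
The plan is to reduce the long-run stake fraction collected by any player to the single, time-invariant quantity $\TM{w_i}{1}$ supplied by Proposition~\ref{prop:pinvariant}, and then to exhibit an initial profile on which the largest-type player's value of this quantity is negligible. Since we work in the full-participation regime, $\tm{P}=\PL$ at every round, exactly one token is minted per round, so the total stake is deterministic, $\tm{S}=\TM{S}{1}+(t-1)$, and hence $\tm{W}=\alpha T+(1-\alpha)\tm{S}$ is deterministic as well. This is precisely what makes Proposition~\ref{prop:pinvariant} go through and fixes the common rescaling factor round by round.

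First I would translate the invariance of the (expected) winning probability into a statement about accumulated stake. Because the winner at round $t$ receives exactly one token, the expected increment to player $i$'s stake in round $t$ equals her winning probability, which by Proposition~\ref{prop:pinvariant} is $\TM{w_i}{1}$ at every round. Summing over rounds gives $\E[\tm{\sigma_i}]=\TM{\sigma_i}{1}+(t-1)\TM{w_i}{1}$, so the expected fraction of the total stake held by $i$ is
\[
\frac{\E[\tm{\sigma_i}]}{\tm{S}}=\frac{\TM{\sigma_i}{1}+(t-1)\TM{w_i}{1}}{\TM{S}{1}+(t-1)},
\]
which is a monotone function of $t$ interpolating between $\TM{\sigma_i}{1}/\TM{S}{1}$ and $\TM{w_i}{1}$ (its derivative in $t$ has the constant sign of $\TM{w_i}{1}\TM{S}{1}-\TM{\sigma_i}{1}$), and in particular never exceeds the larger of these two endpoints. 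Thus it suffices to drive both $\TM{\sigma_{i^*}}{1}/\TM{S}{1}$ and $\TM{w_{i^*}}{1}$ below any target for the largest-type player $i^*$.

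Next I would construct the instance. Assign player $i^*$ type $\tau_{\max}$ and initial stake $\TM{\sigma_{i^*}}{1}=1$, and give each of the remaining $n-1$ players the minimum type $1$ and a large initial stake $M$, so that $\TM{S}{1}=1+(n-1)M$. Then the numerator of $\TM{w_{i^*}}{1}$ is bounded, $\alpha\tau_{\max}+(1-\alpha)\cdot 1\le \tau_{\max}$, while using $\alpha\le 1-\delta$ the denominator satisfies $\alpha T+(1-\alpha)\TM{S}{1}\ge (1-\alpha)(n-1)M\ge \delta(n-1)M$. Hence
\[
\TM{w_{i^*}}{1}\le \frac{\tau_{\max}}{\delta(n-1)M},
\]
and $\TM{\sigma_{i^*}}{1}/\TM{S}{1}=1/(1+(n-1)M)$ is controlled by the same kind of bound. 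Choosing $M$ large enough makes both endpoints, and therefore the fraction at every round, arbitrarily small.

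I expect the main obstacle to be conceptual rather than computational: pinning down the correct reading of Proposition~\ref{prop:pinvariant} as an identity for \emph{expected} winning probabilities, and justifying that the per-round expected stake increment equals this probability (which relies on the deterministic evolution of $\tm{S}$, and hence on full participation). A secondary point to handle carefully is the meaning of ``negligible'': since $\tau_{\max}$, $\delta$, and $n$ are fixed while $M$ is free, the bound shows the long-run share can be pushed below any prescribed threshold, so the largest-type player is starved of stake even though her type advantage is fully reflected in the numerator. If a high-probability (rather than in-expectation) statement is desired, I would append a martingale concentration argument for $\tm{\sigma_{i^*}}/\tm{S}$ around its mean, but the expectation bound already captures the phenomenon that a stake-heavy, low-type majority permanently retains the stake when $\alpha$ is bounded away from $1$.
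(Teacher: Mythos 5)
Your proposal is correct and follows essentially the same route as the paper: both reduce the question to the time-invariant winning probability $\TM{w_{i^*}}{1}$ given by Proposition~\ref{prop:pinvariant} under full participation, and both exhibit an instance in which the remaining players' initial stake $M$ is so large that this probability (and the initial stake share of the largest-type player) becomes negligible. Your expectation-based bound, made uniform in $t$ via the monotonicity of $\bigl(\TM{\sigma_i}{1}+(t-1)\TM{w_i}{1}\bigr)/\bigl(\TM{S}{1}+(t-1)\bigr)$, is if anything a more careful handling of the stochastics than the paper's informal ``fraction of wins tends to $\TM{w_i}{1}$'' claim, but the underlying argument is the same.
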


\section{Conclusions}
This work has rigorously explored the design space of algorithmic monetary policies for blockchain systems, where the tension between performance and decentralization demands careful calibration. By framing participation as a repeated game with endogenous token value, we uncover fundamental limits and opportunities for policy innovation.

Our results identify a central role for the strategic horizons of the agents for performance-driven policies. Short-term agents exacerbate centralization, but some foresight enables policies that preserve decentralization above critical thresholds with some instability in the token value. This suggests that protocol designs should incentivize long-term participation, perhaps through locked stakes or reputation systems. Alternatively, the policy can simulate the lookahead, sacrifice some performance, and guarantee that all myopic players participate.

Virtual stake as a tunable parameter between capability (type) and existing stake enriches the policy design space. Our results demonstrate that initial inequalities persist under passive policies, justifying type-centric selection mechanisms as a necessary corrective to bootstrap decentralization. 

Our work leaves a number of interesting directions to investigate. {The model is now limited to policies for which the budget is fixed beforehand; admittedly, a monetary policy should also decide the supply of tokens. Ours is an initial study in this research agenda, highlighting how even with fixed budgets there are interesting tensions that can be observed. } Our model of participation games can, in fact, be extended to more general settings {to account for richer policy spaces, incorporate investment costs, etc.} Future work might also integrate multiple dimensions (e.g., network latency, energy efficiency, or governance participation) into the type-stake framework. In particular, strategic abstention by large stakeholders poses security risks by lowering the effective stake needed for smaller stakeholders to collude and potentially attack the system, thus making it easier for adversaries to gain majority control over the participating stake. Analyzing these security implications thoroughly remains an important open question for future research. 

\bibliographystyle{plainnat}
\bibliography{references}

@inproceedings{brunjes2020reward,
  title={Reward sharing schemes for stake pools},
  author={Br{\"u}njes, Lars and Kiayias, Aggelos and Koutsoupias, Elias and Stouka, Aikaterini-Panagiota},
  booktitle={2020 IEEE european symposium on security and privacy (EuroS\&p)},
  pages={256--275},
  year={2020},
  organization={IEEE}
}

@article{birmpas2024reward,
  title={Reward Schemes and Committee Sizes in Proof of Stake Governance},
  author={Birmpas, Georgios and Lazos, Philip and Markakis, Evangelos and Penna, Paolo},
  journal={arXiv preprint arXiv:2406.10525},
  year={2024}
}

@article{saleh2021blockchain,
  title={Blockchain without waste: Proof-of-stake},
  author={Saleh, Fahad},
  journal={The Review of financial studies},
  volume={34},
  number={3},
  pages={1156--1190},
  year={2021},
  publisher={Oxford University Press}
}

@inproceedings{kiayias2017ouroboros,
  title={Ouroboros: A provably secure proof-of-stake blockchain protocol},
  author={Kiayias, Aggelos and Russell, Alexander and David, Bernardo and Oliynykov, Roman},
  booktitle={Annual international cryptology conference},
  pages={357--388},
  year={2017},
  organization={Springer}
}

@article{gersbach2022staking,
  title={Staking pools on blockchains},
  author={Gersbach, Hans and Mamageishvili, Akaki and Schneider, Manvir},
  journal={arXiv preprint arXiv:2203.05838},
  year={2022}
}

@article{he2020staking,
  title={Staking pool centralization in proof-of-stake blockchain network},
  author={He, Ping and Tang, Dunzhe and Wang, Jingwen},
  journal={Available at SSRN 3609817},
  year={2020}
}

@inproceedings{schneider2023staking,
  title={Staking Pools with Arbitrary Stake Distribution},
  author={Schneider, Manvir},
  booktitle={2023 IEEE International Conference on Blockchain and Cryptocurrency (ICBC)},
  pages={1--4},
  year={2023},
  organization={IEEE}
}

@article{chitra2021competitive,
  title={Competitive equilibria between staking and on-chain lending},
  author={Chitra, Tarun},
  year={2021},
  publisher={Metagov}
}

@article{john2021equilibrium,
  title={Equilibrium staking levels in a proof-of-stake blockchain},
  author={John, Kose and Rivera, Thomas J and Saleh, Fahad},
  journal={Available at SSRN},
  volume={3965599},
  year={2021}
}

@inproceedings{Minotaur,
author = {Fitzi, Matthias and Wang, Xuechao and Kannan, Sreeram and Kiayias, Aggelos and Leonardos, Nikos and Viswanath, Pramod and Wang, Gerui},
title = {Minotaur: Multi-Resource Blockchain Consensus},
year = {2022},
isbn = {9781450394505},
publisher = {Association for Computing Machinery},
address = {New York, NY, USA},
url = {https://doi.org/10.1145/3548606.3559356},
doi = {10.1145/3548606.3559356},
booktitle = {Proceedings of the 2022 ACM SIGSAC Conference on Computer and Communications Security},
pages = {1095–1108},
numpages = {14},
keywords = {security analysis, resource-based blockchain, proof-of-work, proof-of-stake},
location = {Los Angeles, CA, USA},
series = {CCS '22}
}

@inproceedings{ovezik2022decentralization,
  title={Decentralization analysis of pooling behavior in Cardano proof of stake},
  author={Ovezik, Christina and Kiayias, Aggelos},
  booktitle={Proceedings of the Third ACM International Conference on AI in Finance},
  pages={18--26},
  year={2022}
}

@inproceedings{grandjean2024ethereum,
  title={Ethereum proof-of-stake consensus layer: Participation and decentralization},
  author={Grandjean, Dominic and Heimbach, Lioba and Wattenhofer, Roger},
  booktitle={International Conference on Financial Cryptography and Data Security},
  pages={253--280},
  year={2024},
  organization={Springer}
}

@inproceedings{li2020comparison,
  title={Comparison of decentralization in dpos and pow blockchains},
  author={Li, Chao and Palanisamy, Balaji},
  booktitle={Blockchain--ICBC 2020: Third International Conference, Held as Part of the Services Conference Federation, SCF 2020, Honolulu, HI, USA, September 18-20, 2020, Proceedings 3},
  pages={18--32},
  year={2020},
  organization={Springer}
}

@inproceedings{kiayias2024balancing,
  title={Balancing Participation and Decentralization in Proof-of-Stake Cryptocurrencies},
  author={Kiayias, Aggelos and Koutsoupias, Elias and Marmolejo-Coss{\'\i}o, Francisco and Stouka, Aikaterini-Panagiota},
  booktitle={International Symposium on Algorithmic Game Theory},
  pages={333--350},
  year={2024},
  organization={Springer}
}

@inproceedings{chaidos2023blockchain,
  title={Blockchain participation games},
  author={Chaidos, Pyrros and Kiayias, Aggelos and Markakis, Evangelos},
  booktitle={International Conference on Web and Internet Economics},
  pages={169--187},
  year={2023},
  organization={Springer}
}

@inproceedings{motepalli2024does,
  title={How does stake distribution influence consensus? Analyzing blockchain decentralization},
  author={Motepalli, Shashank and Jacobsen, Hans-Arno},
  booktitle={2024 IEEE International Conference on Blockchain and Cryptocurrency (ICBC)},
  pages={343--352},
  year={2024},
  organization={IEEE}
}

@misc{srivastava2024centralizationproofofstakeblockchainsgametheoretic,
      title={Centralization in Proof-of-Stake Blockchains: A Game-Theoretic Analysis of Bootstrapping Protocols}, 
      author={Varul Srivastava and Sankarshan Damle and Sujit Gujar},
      year={2024},
      eprint={2404.09627},
      archivePrefix={arXiv},
      primaryClass={cs.GT},
      url={https://arxiv.org/abs/2404.09627}, 
}

@article{motepalli2025decentralization,
  title={Decentralization in PoS Blockchain Consensus: Quantification and Advancement},
  author={Motepalli, Shashank and Jacobsen, Hans-Arno},
  journal={arXiv preprint arXiv:2504.14351},
  year={2025},
  url={https://arxiv.org/abs/2504.14351}
}

@article{kiayias2024single,
  title={Single-token vs Two-token Blockchain Tokenomics},
  author={Kiayias, Aggelos and Lazos, Philip and Penna, Paolo},
  journal={arXiv preprint arXiv:2403.15429},
  year={2024}
}

@article{cong2021tokenomics,
  title={Tokenomics: Dynamic adoption and valuation},
  author={Cong, Lin William and Li, Ye and Wang, Neng},
  journal={The Review of Financial Studies},
  volume={34},
  number={3},
  pages={1105--1155},
  year={2021},
  publisher={Oxford University Press}
}

@article{georganas2025airdrop,
  title={Airdrop Games},
  author={Georganas, Sotiris and Kiayias, Aggelos and Penna, Paolo},
  journal={arXiv preprint arXiv:2505.03428},
  year={2025}
}

@inproceedings{GBERS18,
  author       = {Adem Efe Gencer and
                  Soumya Basu and
                  Ittay Eyal and
                  Robbert van Renesse and
                  Emin G{\"{u}}n Sirer},
   title        = {Decentralization in Bitcoin and Ethereum Networks},
  booktitle    = {Financial Cryptography and Data Security - 22nd International Conference,
                  {FC} 2018},
  series       = {Lecture Notes in Computer Science},
  volume       = {10957},
  pages        = {439--457},
  publisher    = {Springer},
  year         = {2018}
}

@article{kahneman1979prospect,
  title={Prospect theory: An analysis of decision under risk},
  author={Kahneman, Daniel and Tversky, Amos},
  journal={Econometrica},
  volume={47},
  number={2},
  pages={263--291},
  year={1979},
  publisher={The Econometric Society}
}

@article{primavera,
  title={Reputation},
  author={P. De Filippi  and O. Shimony and A. Tenorio-Forn\'es},
  journal={Internet Policy Review},
  volume={10},
  number={2},
  year={2021}  
}

@inproceedings{Dec-Bitcoin,
author = {Gencer, Adem Efe and Basu, Soumya and Eyal, Ittay and van Renesse, Robbert and Sirer, Emin G\"{u}n},
title = {Decentralization in Bitcoin and Ethereum Networks},
year = {2018},
isbn = {978-3-662-58386-9},
publisher = {Springer-Verlag},
address = {Berlin, Heidelberg},
url = {https://doi.org/10.1007/978-3-662-58387-6_24},
doi = {10.1007/978-3-662-58387-6_24},
abstract = {Blockchain-based cryptocurrencies have demonstrated how to securely implement traditionally centralized systems, such as currencies, in a decentralized fashion. However, there have been few measurement studies on the level of decentralization they achieve in practice. We present a measurement study on various decentralization metrics of two of the leading cryptocurrencies with the largest market capitalization and user base, Bitcoin and Ethereum. We investigate the extent of decentralization by measuring the network resources of nodes and the interconnection among them, the protocol requirements affecting the operation of nodes, and the robustness of the two systems against attacks. In particular, we adapted existing internet measurement techniques and used the Falcon Relay Network as a novel measurement tool to obtain our data. We discovered that neither Bitcoin nor Ethereum has strictly better properties than the other. We also provide concrete suggestions for improving both systems.},
booktitle = {Financial Cryptography and Data Security: 22nd International Conference, FC 2018, Nieuwpoort, Cura\c{c}ao, February 26 – March 2, 2018, Revised Selected Papers},
pages = {439–457},
numpages = {19},
location = {Nieuwpoort, Cura\c{c}ao}
}

@misc{BTC-forklog-2025,
title={Two Mining Pools Seize 51\% of Bitcoin Hashrate},
author = {ForkLog},
url={https://forklog.com/en/two-mining-pools-seize-51-of-bitcoin-hashrate/}, year={2025}, month={Aug}}

@misc{BTC-thegardian-2014,
title={Bitcoin currency could have been destroyed by '51\%' attack},
author = {Alex Hern},
publisher = {The Gardian},
url={https://www.theguardian.com/technology/2014/jun/16/bitcoin-currency-destroyed-51-attack-ghash-io}, 
year={2014}, 
month={June}}

@misc{CNN-Ethereum-2024,
title={Vitalik Buterin Warns Current Ethereum Staking Model Is a Threat to Its Future},
author = {Prashant Jha},
url={https://www.ccn.com/news/crypto/vitalik-buterin-ethereum-staking-model-threat/}, year={2024}, month={October}}

@misc{Hyperlink-2024,
title={Why Validators Fail: Lessons from the Hyperliquid Incident},
author = {Guneet Kaur},
url={https://www.ccn.com/education/crypto/why-validators-fail-lessons-for-defi-ecosystem/}, year={2024}, month={October}}

@misc{Hyperlink-plunges-2024,
title={Hyperliquid Token Plunges 21\% Amid Fears of Looming North Korea Hack},
author = {Sander Lutz},
url={https://decrypt.co/298679/hyperliquid-token-plunges-fears-north-korea-hack}, year={2024}, month={December}}

@misc{PI-plunges-2024,
title={Pi Network’s Price Plunge: Centralization Risks Exposed},
author = {Steven Soarez},
url={https://capwolf.com/pi-networks-price-plunge-centralization-risks-exposed/}, year={2025}, month={May}}

\appendix
\section{Proofs}

\paragraph{Proof of Proposition~\ref{prop:Nakamoto}.}
It is immediate to check that Condition~\ref{dm-singleton} 
of Assumption~\ref{assmpt:dm} holds.
As for the second property (Condition~\ref{dm-mon-remove}), suppose, by contradiction, that for some $i \neq i^*$, it holds that $k = d(\sigma) < d(\sigma_{-i})$. Since $k = d(\sigma)$ there is a coalition of agents $C$ of size $k$ such that
\begin{equation}
\label{eq:nakamoto_k}
    \sum_{j \in C} \sigma_j > \tau \sum_{j} \sigma_j,
\end{equation}
but, since $d(\sigma_{-i})>k$, then for every coalition $C'$ of size $k$ that does not include $i$ it holds that
\begin{equation}
\label{eq:nakamoto_notk}
 \sum_{j \in C'} \sigma_j \leq \tau \sum_{j \neq i} \sigma_j.
\end{equation}
This in turn implies that $i \in C$, otherwise 
$\sum_{j \in C} \sigma_j \leq \tau \sum_{j \neq i} \sigma_j \leq \tau \sum_{j} \sigma_j$,
contradicting \eqref{eq:nakamoto_k}. But then since $i\in C$, we have that
$\sigma_\ell < (1-\tau)\sigma_i < \sigma_i$
for every $\ell \notin C$, otherwise it follows from \eqref{eq:nakamoto_notk} that
\begin{align*}
    & \sum_{j \in C} \sigma_j = \sum_{j \in C \setminus \{i\}} \sigma_j + \sigma_i = \sum_{j \in C \cup \{\ell\} \setminus \{i\}} \sigma_j + \sigma_i - \sigma_\ell\\
    \leq & \tau \sum_{j \neq i} \sigma_j  + \sigma_i - \sigma_\ell = \tau \sum_{j} \sigma_j + (1-\tau)\sigma_i - \sigma_\ell \leq \tau \sum_{j} \sigma_j,
\end{align*}
which contradicts \eqref{eq:nakamoto_k}. Since $\sigma_{i^*} \geq \sigma_i$, we can conclude that $i^* \in C$. Consider then the coalition $C^*$ of size $k$ that contains all players in $C$ except $i^*$, and one player $\ell^{\max} \notin C$ with maximum stake. Note that every other coalition $C'$ of size $k$ that does not contain $i^*$ {can only have lower total stake than $C^*$, i.e., $\sum_{j \in C'} \sigma_j \leq \sum_{j \in C^*} \sigma_j$ since all the players outside $C$ (and thus all the players outside $C^*$ but $\ell^{\max}$) have stake lower than $\sigma_i$.} Thus, since $\sigma_i \leq \sigma_{i^*}$, we have that for every $C'$ of size $k$ that does not contain $i^*$, it follows from \eqref{eq:nakamoto_notk} that
\begin{align*}
 & \sum_{j \in C'} \sigma_j \leq \sum_{j \in C^*} \sigma_j = \sum_{j \in C} \sigma_j + \sigma_\ell - \sigma_{i^*} = \sum_{j \in C \cup \{ {\ell^{\max}} \} \setminus \{i\}} \sigma_j + \sigma_i - \sigma_{i^*}\\
 \leq & {\tau} \sum_{j \neq i} \sigma_j  + \sigma_i - \sigma_{i^*} = {\tau} \sum_{j \neq i^*} \sigma_j + {(1-\tau) \left(\sigma_i - \sigma_{i^*}\right)} \leq {\tau} \sum_{j \neq i^*} \sigma_j.
\end{align*}
Hence, $d(\sigma_{-i^*})$ must be at least $k+1$, contradicting the hypothesis that $k = d(\sigma) \geq d(\sigma_{-i^*})$.\qed

\paragraph{Proof of Proposition~\ref{prop:noharmforlast2}.}
    It follows immediately from the definition that $\tm{\sigma}$ is never harmful for player $n$ in $P^{\geq n}$. 
    By contradiction, suppose $\tm{\sigma}$ is harmful for player $n-1$ in $P^{\geq n-1}$. Then,
    {\allowdisplaybreaks
    \begin{align*}
    v(d(\sigma_{P^{\geq n-1}})) \tm{\sigma_{n-1}} & \leq v(d(\sigma_{P^{\geq n-1}})) \left(\tm{\sigma_{n-1}} + 
    \tm{B}_{n-1}(P^{\geq n-1})
    \right)
    \\
    & < v(d(\sigma_{P^{\geq n}})) \tm{\sigma_{n-1}},
    \end{align*}
    }from which we conclude that $v(d(\sigma_{P^{\geq n-1}})) < v(d(\sigma_{P^{\geq n}}))$. This is a contradiction since $v$ is monotone in $d$ and $d(\sigma_{P^{\geq n}})$ is at its minimum value by Condition \ref{dm-singleton} in Assumption \ref{assmpt:dm}. \qed

\paragraph{Proof of Lemma~\ref{lem:equilibria}.}
\renewcommand{\bot}{\mathbf{par}}
Consider the following procedure that assigns to each player $i$ for which $\tm{\sigma}$ is harmful in $P^{\geq i}$ a label in $\{i+1, \ldots, n\} \cup \{\bot\}$ as follows: consider players in non-decreasing order of stake; for each player $i$ such that $\tm{\sigma}$ is harmful in $P^{\geq i}$, let $r>i$ be the first index such that either $\tm{\sigma}$ is not harmful for $r$  in $P^{\geq r} $or the label of $r$ is $\bot$; if
 \begin{equation}
 \label{eq:recovery}
  v(d(\sigma_{P^{\geq i}})) \left(\tm{\sigma_i} +
  \tm{B_i}(P^{\geq i}) 
  \right) 
  < v(d(\sigma_{P^{\geq r}})) \tm{\sigma_i},
 \end{equation}
 then assign label $r$ to $i$, otherwise assign label $\bot$. So player $i$ for which $\tm{\sigma}$ is harmful in $P^{\geq i}$ has  label $\bot$ if she prefers $P^{\geq i}$ over $P^{\geq r}$ 
 or $r$ otherwise; in this case  
 we say that $i$ has \emph{recovery winner} $r$. Intuitively, player $i$ with label $\bot$ will want to participate even though $\tm{\sigma}$ can be harmful for her because either the suffixes $P^{\geq l}$ for which \eqref{eq:recovery} is true are not an equilibrium due to the fact that $\tm{\sigma}$ is also harmful for $l$ or there is player $r$ who will participate and suffix $P^{\geq r}$ is not better than $P^{\geq i}$ to $i$.
 In contrast, player $i$ with recovery winner $r$ prefers not to participate because she is guaranteed that each player between $i+1$ and $r-1$ does not participate (because participating is harmful for them and they have a recovery winner), $r$ will participate (because participating is not harmful or she has no recovery winner), and $i$ receives higher expected utility if the set of participants is $P^{\geq r}$ vis-\`a-vis $P^{\geq i}$. Note that the labeling procedure defined above can be run in polynomial time.

 Consider now the following procedure, which given the current stake profile $\tm{\sigma}$ and policy $\mu$ returns a set of participants as follows: considering players in non-increasing order of stake, return $P^{\geq i}$ for the first $i$ in this order for which 
 $\tm{\sigma}$ is not harmful for $i$ in $P^{\geq i}$ or $i$ does not have a recovery winner. Note that also this procedure can be run in polynomial time.

 We claim that the set of participants $P$ returned by this procedure is in equilibrium for myopic players.
 The proof that for each $j \in P$, it is an equilibrium to participate relies on proving the following:
 \begin{equation}
 \label{eq:condition_eq}
  v(d(\sigma_P)) \left(\tm{\sigma_{j^*}} +
  \tm{B_{j^*}}(P)
  \right) \geq v(d(\sigma_{P \setminus \{j^*\}})) \tm{\sigma_{j^*}},
 \end{equation}
 where $j^* = \min \{i \in P\}$. Equation~\eqref{eq:condition_eq} directly proves the claim for $j^*$.
 As for $j \in P$, we have $ j>j^*$, and \eqref{eq:condition_eq} implies, by Assumption~\ref{ass:aligned}, that $v(d(\sigma_P)) \geq v(d(\sigma_{P \setminus \{j^*\}}))$, and thus, in turn, that $d(\sigma_P) \geq d(\sigma_{P \setminus \{j^*\}})$; hence, by the properties of decentralization measure (Condition \ref{dm-mon-remove} of Assumption \ref{assmpt:dm}) $d(\sigma_{P}) \geq d(\sigma_{P \setminus \{j\}})$ for all $j>j^*$, and thus $$v(d(\sigma_{P})) \left(\tm{\sigma_j} + 
 \tm{B_j}(P)
 \right) \geq v(d(\sigma_{P\setminus\{j\}})) \tm{\sigma_j},$$
 meaning that no such $j>j^*$ has an incentive to abstain and drop out of $P$. 
 
 To prove 
 Equation \eqref{eq:condition_eq}, we distinguish 
 two cases. 
 
 Assume first that $\tm{\sigma}$ is not harmful for $j^*$ in $P^{\geq j^*}$. Then Equation~\eqref{eq:condition_eq} follows immediately from the definition of harmfulness.
 
 Suppose now that $\tm{\sigma}$ is harmful for $j^*$ in $P^{\geq j^*}$. By definition of the procedure returning $P$, this means that 
 $j^*$ has no recovery winner. We now consider two sub-cases that distinguish whether $j^*+1$ could have been a ``potential'' recovery winner for $j^*$ or not.
 
 If $\tm{\sigma}$  is not harmful for $j^*+1$ in $P^{\geq j^*+1}$ or she has no recovery winner, then \eqref{eq:recovery} does not hold for $i = j^*$ and $r = j^*+1$ since $j^*+1$ is not a recovery winner for $j^*$. This immediately implies that \eqref{eq:condition_eq} holds, since $\sigma_{P^{\geq j^*}} = \sigma_{P}$ and $\sigma_{P^{\geq j^*+1}} = \sigma_{P \setminus \{j^*\}}$.

 Consider now the case that $\tm{\sigma}$ is harmful for both $j^*$ and $j^*+1$, $j^*$ has no recovery winner, while 
 $j^*+1$ has recovery winner $r^*$. Since $r^*$ is a recovery winner for $j^*+1$, then it is either the case that $\tm{\sigma}$  is not harmful for $r^*$ in $P^{\geq r^*}$ or that $r^*$ has label $\bot$, and Equation~\eqref{eq:recovery} holds for $i = j^*+1$ and $r = r^*$, which in turn implies that 
 \begin{equation}
     \label{eq:jstar1vskstar}
     v(d(\sigma_{P^{\geq j^*+1}})) < v(d(\sigma_{P^{\geq r^*}})).
 \end{equation}
 Conversely, since $r^*$ is not a recovery winner for $j^*$, then it must be the case that Equation~\eqref{eq:recovery} does not hold for $i = j^*$ and $r = r^*$. By Assumption~\ref{ass:aligned}, this implies that
 \begin{equation}
     \label{eq:jstarvskstar}
     v(d(\sigma_{P^{\geq j^*}})) \geq v(d(\sigma_{P^{\geq r^*}})).
 \end{equation}
 Equation~\eqref{eq:condition_eq} then follows by joining Equation~\eqref{eq:jstarvskstar} with Equation~\eqref{eq:jstar1vskstar}, and observing, as above, that $\sigma_{P^{\geq j^*}} = \sigma_{P}$ and $\sigma_{P^{j^*+1}} = \sigma_{P \setminus \{j^*\}}$.   

To complete the proof that $P$ is an equilibrium for myopic players, we next prove that for $j < j^*$, it is convenient not to participate. Indeed, since they have been removed by the procedure above, it means that participation is harmful for all of them and they have a recovery winner. Moreover, it is immediate to check that, according to the procedure of label assignment, for all these players, the recovery winner must be {$j^*$}, and thus, by definition of recovery winner, all these players prefer the set of participants $P$ to their own participation.

 We conclude the proof  by observing that 
 the arguments above show that no other equilibrium exists since players break ties in favor of participation (cf. Definition \ref{def:myopic}). \qed

\paragraph{Proof of Lemma~\ref{lem:equilibria:lookahead}.}

Let us consider the following procedure, that returns the set $P$ of participants for the current time step: For each player $i$ in non-decreasing order of stake, if $\tm{\sigma}$ is not harmful for $i$ in $P^{\geq i}$, then set $P = P^{\geq i}$. Note that the procedure returns a suffix set of participants (recall that participants are in order of stake, and thus a suffix set of participants includes all players whose stake is below a given threshold).

We will prove that for every player $i \in P$ returned by the procedure above, $\tm{\sigma}$ is not harmful in $P$ (even if 
harmfulness has been evaluated only against a smaller participation set). In turn, this means that no player $i \in P$ has any incentive to deviate. Consider instead $i \notin P$. We will prove that $i$ has a recovery plan. Hence, for $i$ it is beneficial to abstain. Hence, we can conclude that $P$ is in equilibrium.

Let us first prove that for every player $i \in P$, $\tm{\sigma}$ is not harmful in $P$.
To this aim let $j^*$ be the player in $P$ with larger stake.
Note that for $j^*$ participation is not harmful against $P$ by construction, since $j^*$ is the last player for whom $\tm{\sigma}$ has been evaluated to not be harmful (hence, $P = P^{\geq j^*}$ and the harmfulness of $i$ has been evaluated exactly against $P$). Hence we have that
$$
 v(d(\sigma_P)) \left(\tm{\sigma_{j^*}} + \tm{B_{j^*}}(P) 
 \right) \geq v(d(\sigma_{P \setminus \{j^*\}})) \tm{\sigma_{j^*}}. 
$$ 
By Assumption~\ref{ass:aligned}, this implies that $v(d(\sigma_P)) \geq v(d(\sigma_{P \setminus \{j^*\}}))$, and thus, in turn, that $d(\sigma_P) \geq d(\sigma_{P \setminus \{j^*\}})$; hence, by the properties of decentralization measure i.e., Condition \ref{dm-mon-remove} of Assumption \ref{assmpt:dm}) $d(\sigma_{P}) \geq d(\sigma_{P \setminus \{j\}})$ for all $j>j^*$, and thus $$v(d(\sigma_{P})) \left(\tm{\sigma_j} + 
\tm{B_j}(P)\right) \geq v(d(\sigma_{P\setminus\{j\}})) \tm{\sigma_j},$$
meaning that no such $j>j^*$ has an incentive to abstain and drop out of $P$.
 
Let us finish by providing the recovery plan for each non-participant player.
Consider the following procedure:
\begin{enumerate}[nosep]
    \item Initialize $x = 1$;
    \item let $\TM{\sigma_j}{t+x} = \TM{\sigma_j}{t+x-1} + \TM{B_j}{{t+x-1}}(\TM{P}{t+x-1})$ 
    for every $j$ \label{item:recovery};
    \item Run the procedure described above to compute the equilibrium set of participants with $\TM{\sigma}{t+x}$ in place of $\tm{\sigma}$, and let $\TM{P}{t+x}$ be its output;
    \item Increase $x$ and repeat from Line~\ref{item:recovery}.
\end{enumerate}
Now for a player $i \notin P$, we stop the procedure above after $X_i$ steps, where $X_i$ is the smallest value of $x$ for which $i \in \TM{P}{t+x}$. It is immediate to check, that as long as stakes are bounded, in a bounded number of steps we have to reach a stake profile that is not harmful for $i$. Indeed, after remaining participants have been selected a sufficiently large, but still bounded, number of times, they accumulated a stake so large that the participation of $i$ cannot decrease decentralization, and hence participation becomes no longer  harmful. 
\qed

\paragraph{Proof of Theorem~\ref{thm:fails_myopic}.}
 Let us first assume that the policy is deterministic, i.e., it always chooses as a winner (one of) the participant(s) with largest type. We assume that the policy breaks ties by id. 

Let us partition the time steps $t$ according to the participant in $\tm{P}$ with largest type. We let $T_i$ denote the set containing all the time steps in which $i$ is the participant of largest type. We further partition $T_i$ in blocks $T_{i,1}, T_{i,2}, \ldots$; each ``block'' $T_{i,l}$ denotes the set containing all the time steps $t \in T_i$ such that given two time steps $t,t' \in T_{i,l}$ there is no $t'' \in [t, t']$ such that $t'' \in T_j$ with $\tau_j < \tau_i$. That is, in all the time steps between $\min T_{i,l}$ and $\max T_{i,l}$ the winner can only be $i$ (and those time steps belong to $T_{i,l}$) or players with type not worse than $\tau_i$. By the arguments used in the proof of Lemma~\ref{lem:equilibria}, it must be the case that for $t = \max T_{i,j}$, $\TM{\sigma}{t+1}$ is harmful for $i$, otherwise $i$ would have preferred to participate.

{Let us denote with $i_j$ the player with the $j$-th highest type, according to the tie-breaking rule adopted by $\mu^*$. 

Assume first that there is $j<n$ and block $l$, such that $|T_{i_j,l}|$ is infinite. At each time step $t$ in $T_{i_j,l}$, the stake of $i_j$ increases of an amount $\tm{\beta}$. Moreover in between time steps in $T_{i_j,l}$, only the stake of players $i_v$, for $v<j$, may increase. But then decentralization can only decrease as all players $v'$, with $v'>j$, never get any token. Decentralization decreases until it reaches its minimum and remains there since $|T_{i_j,l}|$ is infinite. 

The same argument holds true if all the $T_{i_j,l}$'s have bounded size, $j<n$ since $T_{i_n}=\emptyset$ by Proposition \ref{prop:noharmforlast2}. In fact, $i_n$ will always participate but never get any tokens allocated by $\mu^*$ and the decentralization will once more reach its minimum eventually. At this point, the stake profile is not harmful for any player, meaning that they will all participate and only player $i_1$ will received new tokens.  
}

Consider now the case that there is a probability $p$ of selecting as a winner a participant different from $i_1$. 
Since $p$ is negligible, 
then the behavior described above for the deterministic case, holds except with negligible probability. Hence, the policy fails.
\qed

\paragraph{Proof of Theorem~\ref{thm:asymlookahead}.}
 As in the proof of Theorem~\ref{thm:fails_myopic}, we have that players with highest types participate as long as participation is not harmful to them. Anyway, as long as participation becomes harmful for them, they cease to participate because they have a recovery plan (differently from the proof of Theorem~\ref{thm:fails_myopic}). Moreover, since the recovery plan of player $i$ is an equilibrium, then it will be implemented in the next steps\footnote{Note that for probabilistic monetary policy, the recovery plan may not be actually implemented. Still the first step of the recovery plan will be implemented. The recovery plan can then be updated according to the realization of policy at the current round, and this updated recovery plan can be played in the next rounds.}. Hence the desired system value is eventually restored above the threshold needed for that player to not be harmful. \qed

\paragraph{Proof of Theorem~\ref{thm:asymlookahead2}.}
As in the proof of Theorem~\ref{thm:fails_myopic}, we eventually reach a stake profile for which the only way to increase decentralization is to have the player with smaller stake win (with large probability). In order for this to happen, it is necessary that this player is the only one to participate due to the definition of $\mu^*$. 
Anyway, differently from Theorem~\ref{thm:fails_myopic}, this can happen when players have asymmetric lookahead. Indeed, it is in the recovery plan of every remaining player to let this player win with large probability, and thus increase decentralization (and, consequently, the system value). 
Anyway, when this happens (i.e., only player $n$ participates) the system status is at its minimum value. \qed

\begin{lemma}
\label{lem:good_recovery}
Suppose that players have asymmetric lookahead, and let $i$ be such a player. Let $t+1$ be a time step in which participation of $i$ is harmful, and let $R_i$ be its recovery plan of length $r$. Then for every $x \in \{t+1, t+r\}$, $d(\TM{\sigma}{x}) \geq d(\TM{\sigma}{x-1})$.
\end{lemma}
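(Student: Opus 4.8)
The plan is to prove the claim by induction on the rounds of the recovery, showing that each successive round weakly raises the decentralization measure. First I would use Lemma~\ref{lem:equilibria:lookahead} to recall that along the recovery every stage game has a unique equilibrium that is a suffix $P^{\geq m}$ of the stake-ordered players, that $i$ — together with every player for whom the current profile is harmful and who retains a recovery plan — abstains until round $t+r$, and that the winner chosen by the type-favoring policy $\mu^*$ in each such round is the highest-type participant. Two monotone effects then drive the recovery: the participating suffix can only enlarge as the lower players accumulate stake, and the per-round budget is channelled to a participant lying outside the set of abstaining dominant players.

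To turn this into the inequality I would decompose the comparison between round $x-1$ and round $x$ into a participation-set part and a stake-update part. For the set part, I would reuse the consequence of Assumption~\ref{ass:aligned} already exploited in the proofs of Lemmas~\ref{lem:equilibria} and~\ref{lem:equilibria:lookahead}: a player $j$ joins the participating suffix only when participation is not harmful, i.e. $(\tm{\sigma_j} + \tm{B_j}(P\cup\{j\}))\,v(d(\sigma_{P\cup\{j\}})) \ge \tm{\sigma_j}\,v(d(\sigma_P))$, whence $d(\sigma_{P\cup\{j\}}) \ge d(\sigma_P)$; thus every enlargement of the suffix weakly increases decentralization. For the stake part, restricted to the $\tau$-Decentralization Index, I would show that crediting the budget to the selected participant cannot shrink the index, reusing the partial-sum manipulation from the proof of Proposition~\ref{prop:Nakamoto} to see that the minimal number of parties exceeding a $\tau$-fraction of the participating stake is non-decreasing. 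Chaining these two observations across consecutive recovery rounds yields $d(\TM{\sigma}{x}) \ge d(\TM{\sigma}{x-1})$.

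The step I expect to be the main obstacle is the stake-update part together with its interaction with the moving suffix: since $\mu^*$ rewards the highest-type participant, I must rule out that this reward pushes the winner into a dominant position that lowers the $\tau$-index on the current participating set, and I must guarantee that the participating suffix never strictly shrinks as the recovery proceeds. Both facts should follow from the harmful-player/recovery-winner bookkeeping developed in the proof of Lemma~\ref{lem:equilibria} together with the defining property of the recovery plan — namely that $i$ abstains exactly until the stake accumulated by the remaining players renders its reentry non-harmful — which anchors the induction at the entry round $t+1$, where the system value bottoms out before climbing back as the suffix refills, and where Condition~\ref{dm-mon-remove} of Assumption~\ref{assmpt:dm} is used to pass from the removal of the maximal-stake player to that of any abstaining player.
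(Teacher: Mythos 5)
There is a genuine gap, and it sits exactly where you yourself flagged the main obstacle: the ``stake-update part.'' You propose to show, by reusing the partial-sum manipulation of Proposition~\ref{prop:Nakamoto}, that crediting the round's budget to the selected winner cannot shrink the $\tau$-Decentralization Index. But this is not an arithmetic fact, and no counting argument can establish it. Proposition~\ref{prop:Nakamoto} is about \emph{deleting} a coordinate of the stake vector, not about \emph{adding} budget to one; and if the winner $w$ at round $x$ already lies among the top $d-1$ stakeholders of $\TM{\sigma}{x-1}$ --- which the suffix structure does not preclude, since the abstaining prefix can be shorter than $d-1$ --- then a sufficiently large budget $\beta$ really does push the top $d-1$ players above a $\tau$ fraction of the total, and the index drops. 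The paper closes precisely this case with a game-theoretic contradiction that your proposal never supplies: writing $X$ and $Y$ for the stake inside and outside the top $d-1$ set, a drop of the index forces $\beta > \bigl(\tau Y - (1-\tau)X\bigr)/(1-\tau)$; this same largeness of $\beta$ makes the decentralization restricted to the participating set $\TM{P}{x}$ drop as well when $w$ wins; by Assumption~\ref{ass:aligned} such a drop means that winning is a strict loss for $w$; and since $w$ has asymmetric lookahead and a recovery plan (as established in the proof of Theorem~\ref{thm:asymlookahead}), $w$ would then prefer to abstain --- contradicting the fact that $w$ was selected as winner, since winners must participate. In short, the lemma is an \emph{equilibrium} statement, true because of the winner's own incentives; your plan treats it as a combinatorial invariant of the index, which it is not. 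Pointing back to the ``harmful-player/recovery-winner bookkeeping'' of Lemma~\ref{lem:equilibria} does not fill this hole, because that bookkeeping never confronts the effect of awarding $\beta$ to a player.

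Two secondary remarks. First, your ``participation-set part'' is not needed: the lemma compares $d(\TM{\sigma}{x-1})$ with $d(\TM{\sigma}{x})$, and between these two profiles the stake vector changes only by the winner receiving $\beta$; the participation set enters the paper's proof solely through the incentive argument above, not through any monotonicity of the suffix. Relatedly, your auxiliary claim that the participating suffix ``can only enlarge'' along the recovery is neither proved in your sketch nor required, so better dropped than assumed. Second, your observation that the budget goes to ``a participant lying outside the set of abstaining dominant players'' does correctly dispose of the easy case of the paper's proof (winner outside the top $d-1$ stakeholders, where the index trivially cannot decrease), but note that ``outside the abstaining players'' is strictly weaker than ``outside the top $d-1$ stakeholders,'' which is exactly why the hard case cannot be avoided.
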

\begin{proof}
Let $d(\TM{\sigma}{x-1}) = d$. Let $S_d$ to denote the set of $d$ players with largest stake in $\TM{\sigma}{x-1}$. Since $d(\TM{\sigma}{x-1}) = d$, then the stake of players in $S_{d-1}$ is at most a fraction $\tau$ of the total stake. Let $w$ be the player selected as winner at time $x$. If $x \notin S_{d-1}$, then the stake of players in $S_{d-1}$ is at most a fraction $\tau$ of the new total stake (that is larger than previous total stake). Hence, $d(\TM{\sigma}{x-1}) \geq d$, as desired.

Consider instead that $w \in S_{d-1}$. Let $X = \sum_{j \in S_{d-1}} \TM{\sigma}{x-1}(j)$ and $Y = \sum_{j \notin} \TM{\sigma}{x-1}(j)$. As stated above $X \leq \tau (X+Y$), or equivalently, $(1-\tau)X \leq \tau Y$. On the other side, by denoting with $j_d$ the player in $S_d \setminus S_{d-1}$, we must have that $X + \TM{\sigma}{x-1}(j_d) > \tau (X+Y)$, from which it follows that $\TM{\sigma}{x-1}(j_d) > \tau Y - (1-\tau)X$

Suppose, by sake of contradiction, that $d(\TM{\sigma}{x}) < d(\TM{\sigma}{x-1})$. Then $X+\beta > \tau(X+Y+\beta)$, from which it follows that $\beta > \frac{\tau Y - (1-\tau)X}{1- \tau}$. In other words, $\beta$ is sufficiently large that we do not need to consider the contribution of $j_d$.

Let us now consider $d(\TM{\sigma_{\TM{P}{x}}}{x-1})$ and $d(\TM{\sigma_{\TM{P}{x}}}{x-1})$. If the former requires $\delta > 1$ players to have a fraction larger than $\tau$ of the total stake, for the latter the stake of $w$ is large enough to guarantee that this property is guaranteed even without  the contribution of $j_d$. Hence, we have that $d(\TM{\sigma_{\TM{P}{x}}}{x}) < d(\TM{\sigma_{\TM{P}{x}}}{x-1})$. Hence, by Assumption~\ref{ass:aligned}, participation of $w$ is causing a loss to $w$. Since, as showed in the proof of Theorem~\ref{thm:asymlookahead}, $w$ has a recovery plan guaranteeing the value for $w$ to be recovered, then it is convenient fo $w$ to do not participate, that contradicts the fact that it has been selected as the winner.\qed
\end{proof}

\paragraph{Proof of Theorem~\ref{thm:asymlookahead3}}
Let us denote with $\tm{\sigma'}$ the stake profile at round $t$ if policy $\mu^*$ was adopted.
 Recall that the winner selected at round $t$ by $\mu^\ell$ is exactly the same selected by $\mu^*$ at round $t+1$. Let us denote with $w_t$ this winner, and with $I^\beta_{w_t}$ the vector such that $I^\beta_w(w_t) = \beta$ and $I^\beta_w(i) = 0$ for every $i \neq w_t$.
 Then $\TM{\sigma}{t+1} = \tm{\sigma} + I^\beta_{w_{t}}$ and $\TM{\sigma'}{t+2} = \TM{\sigma'}{t+1} + I^\beta_{w_{t}}$.
 Let $d_0 = d(\TM{\sigma}{0})$. We prove that for each round $t \geq 0$, $v(d(\tm{\sigma})) \geq v(d_0) \geq \theta$. Indeed, if $w_{t}$ is the player with largest stake, the claim follows since, by Assumption~\ref{ass:aligned}, $v(d(\tm{\sigma})) \geq v(d(\TM{\sigma}{t-1}))$, and, by inductive hypothesis, $v(d(\TM{\sigma}{t-1})) \geq v(d_0)$. If instead $w_t$ is not the player with largest stake, then let $t' < t+1$ be the last round for which participation of the agent of larger stake is not harmful in $\mu^*$. Note, that by our assumption on the starting profile, $t'$ is well-defined. Then $w_t$ is the winner selected in the $(t+1-t')$-th step of the recovery plan of $t'$. By Lemma~\ref{lem:good_recovery}, we then have that $v(d(\tm{\sigma})) \geq v(d(\TM{\sigma}{t-1}))$. By merging this with the inductive hypothesis the claim follows.
 
 Moreover, since in $\mu^\ell$ the winner does not depend on who is participating, then participating is always a non-dominated strategy, even for myopic players.
 We conclude the proof by observing that, since we are simulating $\mu^*$ we have exactly the same winners of this policy, simply shifted by one round. Hence, the performances of $\mu^\ell$ tend to the $\mu^*$ as the number of rounds becomes large enough to discount the first non-simulated round.\qed

\paragraph{Proof of Theorem~\ref{thm:sybil}.}
 If player $i$ splits herself in the preferred recovery sybils, then the expected utility of these players must be 0, since in these case no sybil has the largest type. If player $i$ splits in other recovery sybils the situation cannot be better for the same reason. If player $i$ splits in non-recovery sybils, then participation of sybils would be harmful for sybil with largest type, and thus the expected utility of this player will be negative. Hence, the player cannot achieve an utility larger than zero, that is exactly the one that is achieved by player $i$ if she does not participates. \qed

\paragraph{Proof of Proposition~\ref{prop:pinvariant}.}
First, we prove the following two properties:
\begin{enumerate}[label=(\roman*), nosep]
     \item \label{lem:virtuali} $\TM{S}{t+1}=\tm{S}+1 , \text{ and hence}\quad \tm{S} = \TM{S}{1} + t$.
     \item \label{lem:virtualii} $\TM{W}{t+1}=\tm{W}+(1-\alpha), \text{ and hence}\quad \tm{W} = \TM{W}{1} + t(1-\alpha)$.
\end{enumerate}
(i) Since the reward assigned by the policy is $1$ for the winner and $0$ for every remaining player, then
\(
  \TM{S}{t+1}=\tm{S} + \sum_i \tm{w_i} = \tm{S} + 1.
\)
Induction gives the announced closed form.\\
(ii) Insert $\tm{w_i} = \frac{\tm{p_i}}{\tm{W}}$ into $\tm{p_i} = \alpha \tau_i + (1-\alpha) \tm{\sigma_i}$ one step ahead:
\[
  \TM{p_i}{t+1}
    =\alpha \tau_i + (1-\alpha)\bigl(\tm{\sigma_i}+\tm{w_i}\bigr)
    =\tm{p_i} + (1-\alpha)\tm{w_i}.
\]
Summing over $i$ yields
$\TM{W}{t+1}=\tm{W}+(1-\alpha)\sum_i \tm{w_i}=\tm{W}+(1-\alpha).$

Using the calculation above,
\(
  \TM{p_i}{t+1}=\tm{p_i}+(1-\alpha) \tm{w_i}
            =\tm{w_i}\bigl(1+(1-\alpha)/\tm{W}\bigr),
\)
while
\(
  \TM{W}{t+1}=\tm{W}+(1-\alpha)=\tm{W}\bigl(1+(1-\alpha)/\tm{W}\bigr).
\)
Dividing gives $\TM{w_i}{t+1}=\tm{w_i}$.
An induction on $t$ completes the proof. \qed

\paragraph{Proof of Theorem~\ref{thm:virtual}}
Since we are assuming full participation at each round, and selection probabilities are invariant according to Proposition~\ref{prop:pinvariant}, then the fraction of times player $i$ has been selected as winner by the policy $\mu^\alpha$ over $t$ rounds tends to $\TM{w_i}{1}$ as $t$ goes to infinity. Hence, the stake accumulated by $i$ tends to $\TM{\sigma_i}{1} + t\TM{w_i}{1}$, where $\TM{w_i}{1} = \frac{\alpha \tau_i+(1-\alpha)\TM{\sigma_i}{1}}{\TM{W}{1}}$, as $t$ goes to infinity.

Let $i^*$ be the player with largest type, and consider $\TM{\sigma}{1}$ such that $\TM{\sigma_{i^*}}{1} = 1$, and there is $i \neq i^*$ such that $\TM{\sigma_i}{1} \geq \frac{\alpha \left(\tau_{i^*} - \tau_i\right)}{(1-\alpha)} + M$, for some large $M$. Then the stake of $i$ will tend to $\TM{\sigma_i}{1} + t \frac{\alpha \tau_{i^*} + (1-\alpha) M}{\TM{W}{1}}$, while the stake of $i^*$ tends to $1 + t \frac{\alpha \tau_{i^*} + \beta}{\TM{W}{1}}$, that is negligibly smaller than the stake of $i^*$. \qed

\end{document}